\pdfoutput=1
\documentclass[lettersize,journal]{IEEEtran}
\usepackage{amsmath,amsfonts}
\usepackage{array}
\usepackage{textcomp}
\usepackage{stfloats}
\usepackage{url}
\usepackage{verbatim}
\usepackage{graphicx}
\usepackage{cite}
\usepackage{epstopdf}
\usepackage{amssymb}
\usepackage{algorithm}
\usepackage{algpseudocode}
\usepackage{multirow}
\usepackage[caption=false,font=footnotesize,labelfont=rm,textfont=rm]{subfig}
\usepackage{xcolor}
\usepackage[hidelinks]{hyperref} 
\newtheorem{theorem}{Theorem}
\newtheorem{proposition}{Proposition}
\newtheorem{remark}{Remark}
\newtheorem{lemma}{Lemma}

\begin{document}

\title{CRB-Based Waveform Optimization for MIMO ISAC Systems With One-Bit ADCs}

\author{
	Qi~Lin,~\IEEEmembership{Graduate Student Member,~IEEE},
	Hong~Shen,~\IEEEmembership{Member,~IEEE},
	Wei~Xu,~\IEEEmembership{Fellow,~IEEE}, 
	and Chunming~Zhao,~\IEEEmembership{Member,~IEEE}
	
	\thanks{Part of this work has been presented at
		the 2025 IEEE 101st Vehicular Technology Conference (VTC2025-Spring), Oslo, Norway, Jun. 2025 \cite{ref0}.}
	\thanks{Qi~Lin and Hong~Shen are with the National Mobile Communications Research Laboratory, Southeast University, Nanjing 210096, China (e-mail: qilin@seu.edu.cn; shhseu@seu.edu.cn).}
	\thanks{Wei Xu and Chunming~Zhao are with the National Mobile Communications Research Laboratory, Southeast University, Nanjing 210096, China, and also with the Purple Mountain Laboratories, Nanjing 211111, China (e-mail: wxu@seu.edu.cn; cmzhao@seu.edu.cn).}                      
}

\maketitle

\begin{abstract}
This paper studies the transmit waveform optimization for a quantized multiple-input multiple-output (MIMO) integrated sensing and communication (ISAC) system, where one-bit analog-to-digital converters (ADCs) are employed to enable a low-cost and power-efficient hardware implementation. Focusing on the parameter estimation task, we propose two novel Cram\'er-Rao bounds (CRBs) for both point-like target (PT) and extended target (ET) to characterize the impact of quantization distortion on the estimation accuracy, where associated estimation methods are also developed to approach these theoretical CRBs. Moreover, with the goal of jointly enhancing the sensing and communication performances, we formulate the bi-criterion ISAC waveform optimization problem by minimizing the derived CRB objectives subject to a communication symbol error probability (SEP) constraint and a total power constraint, which, due to the high nonlinearity of the one-bit CRBs, are extremely nonconvex. To yield a high-quality suboptimal solution, we develop an efficient alternating direction method of multipliers (ADMM) framework which exploits the majorization-minimization (MM) technique to address the nonconvex issue. Simulation results verify that the one-bit CRBs are tight for characterizing the quantized estimation performance and the proposed estimation methods also show clear performance advantages over the existing benchmark schemes. Furthermore, a flexible trade-off between the CRB and the SEP performance can be achieved by the developed ADMM framework, demonstrating the effectiveness of the optimized ISAC waveform.
\end{abstract}

\begin{IEEEkeywords}
Integrated sensing and communication (ISAC), one-bit analog-to-digital converters (ADCs), Cram\'er-Rao bound (CRB), waveform optimization.
\end{IEEEkeywords}

\section{Introduction}
\IEEEPARstart{I}{ntegrated} sensing and communication (ISAC) empowers current wireless communication systems to support various  sensing scenarios and has become an important research focus \cite{ref1,ref2,ref3,ref4}. Particularly, in the recent IMT-2030 framework \cite{ref5}, ISAC has been envisioned as a pivotal enabler for the sixth-generation (6G) wireless systems and expected to play a key role in providing full sensing capabilities while meeting crucial communication requirements.

To fully reap the benefits of ISAC, multiple-input multiple-output (MIMO) based transceiver optimization has been widely investigated in a number of recent studies \cite{ref21,ref22,ref23,ref24,ref25,ref26,ref27,ref28}. Specifically, given the fact that the sensing capability heavily hinges on the beampattern of transmitted signals, the authors of \cite{ref21,ref22,ref23} optimized the transmit beampattern with the communication signal-to-interference-plus-noise ratio (SINR) \cite{ref21,ref22} and the symbol error rate (SER) \cite{ref23} constraints imposed, respectively. Moreover, focusing on improving the parameter estimation accuracy, the authors of \cite{ref24,ref25,ref26} advocated the minimization of the Cram\'er-Rao bound (CRB) \cite{ref24,ref25} or the posterior CRB \cite{ref26} objectives, which represents the lower bound of the mean-squared error (MSE) of arbitrary unbiased estimators for deterministic or stochastic unknown parameters. More specifically, the authors of \cite{ref24,ref26} optimized the transmit covariance by minimizing the CRB and the posterior CRB, respectively, subject to the communication SINR and total power constraints, while the authors of \cite{ref25} optimized the transmit covariance by maximizing the communication achievable rate subject to the maximum allowable CRB and total power constraints. Distinct from the above studies, the target detection probability was investigated in \cite{ref27,ref28} and the associated sensing SINR was maximized subject to the communication achievable rate and various power constraints.

Effective implementation of MIMO ISAC systems requires a large antenna array to provide reliable sensing and communication capabilities, which inevitably increases the hardware cost and power consumption of the radio frequency (RF) chains at the ISAC transceiver. A practical solution to address this issue is using low-resolution (i.e., few quantization bits) digital-to-analog converters (DACs) and analog-to-digital converters (ADCs). Notably, deploying one-bit quantizers yields unparalleled cost and power efficiency, which is thus the focus of this paper. Numerous transceiver designs concerning one-bit quantization have been proposed for MIMO systems, including symbol level precoding (SLP) with one-bit DACs \cite{ref41,ref42,ref43} as well as channel estimation \cite{ref44} and symbol detection \cite{ref45} with one-bit ADCs. In contrast, only a handful of recent studies considered the transceiver optimization for one-bit MIMO ISAC systems \cite{ref46,ref47,ref48}. Specifically, the authors of \cite{ref46,ref47} optimized the transmit waveform for MIMO ISAC systems with one-bit DACs, where the communication MSE is minimized subject to the sensing CRB and binary DAC constraints in \cite{ref46} and a weighted sum of the communication MSE and the sensing waveform similarity was minimized with binary DAC constraints imposed in \cite{ref47}. Furthermore, a joint transceiver design for MIMO ISAC systems with one-bit DACs and ADCs was proposed in \cite{ref48}, where the communication MSE and a quantized SINR metric were jointly optimized to enhance both the downlink communication and target detection performances. Nonetheless, the problem of improving the parameter estimation performance for MIMO ISAC systems with one-bit ADCs remains to be explored. A closely related work is \cite{ref49}, wherein the authors presented an insightful CRB metric based on binary observations to characterize the estimation accuracy for one-bit MIMO radar. However, the proposed one-bit CRB in \cite{ref49} involves a nonanalytical Q-function, which discourages its use as a transceiver design metric. To the best of the authors' knowledge, the waveform design problem aimed at improving both the parameter estimation and the downlink multiuser communication for MIMO ISAC systems with one-bit ADCs has not been investigated before, which motivates this work.

In this paper, we study the parameter estimation and waveform optimization for MIMO ISAC systems equipped with one-bit ADCs to achieve superior energy and hardware efficiency. However, the severe one-bit quantization distortions yield binary observations at the ISAC receiver, which complicate the theoretical performance analysis and the corresponding ISAC waveform design. The main contributions of this paper are summarized as follows:
\begin{itemize}
\item Through a Bussgang-based quantization analysis and by leveraging the worst-case Gaussian assumption, we derive two novel CRB metrics for both point-like target (PT) and extended target (ET) models to characterize their estimation accuracy under one-bit quantization. Moreover, we also present the corresponding one-bit estimation methods to approach the derived CRBs.
\item By minimizing the proposed CRB objectives subject to a communication symbol error probability (SEP) constraint and a total power constraint, we formulate new ISAC waveform design problems for the PT and ET scenarios, respectively, each of which turns out to be extremely nonconvex. To find a high-quality solution, we develop an efficient alternating direction method of multipliers (ADMM) based algorithm, in which the majorization-minimization (MM) technique is exploited to solve the subproblems in each iteration.
\item Numerical results are presented to validate the tightness of the derived one-bit CRBs, and to show that the proposed waveform design yields noticeable performance improvements compared to the existing benchmark schemes, despite at the cost of an increased computational complexity due to the quantization-aware optimization. Moreover, the developed ADMM framework also achieves a flexible trade-off between the estimation and communication performances for both PT and ET scenarios, thereby verifying the effectiveness of the optimized ISAC waveform.
\end{itemize}

The rest of the paper is organized as follows.
Section \ref{section II} introduces the considered MIMO ISAC system model with one-bit ADCs.
In Section \ref{section III}, we derive one-bit CRB metrics for both PT and ET, along with their respective one-bit estimation methods.
Sections \ref{section IV} and \ref{section V} elaborate on the ISAC waveform optimization algorithms for PT and ET scenarios, respectively.
Section \ref{section VI} presents the simulation results and conclusions are drawn in Section \ref{section VII}.

\emph{Notations:}
Lowercase letters, boldface lowercase letters, and boldface uppercase letters are used to denote scalars, column vectors, and matrices, respectively.
The conjugate, the transpose, the Hermitian transpose, and the inverse of $\mathbf{A}$ are denoted by $\mathbf{A}^*$, $\mathbf{A}^T$, $\mathbf{A}^H$, and $\mathbf{A}^{-1}$, respectively.
The $(i,j)$-th entry of $\mathbf{A}$ is denoted by $[\mathbf{A}]_{i,j}$.
We use $\operatorname{tr}(\mathbf{A})$ and $\operatorname{vec}(\mathbf{A})$ to denote the trace and the vectorization of $\mathbf{A}$, respectively, and $\operatorname{unvec}(\cdot)$ denotes the inverse operation of $\operatorname{vec}(\cdot)$.
We denote by $\lambda_{\max}(\mathbf{A})$ the maximum eigenvalue of $\mathbf{A}$.
The operator $\operatorname{diag}(\cdot)$ signifies a diagonal matrix whose diagonal elements lie on the main diagonals of the input matrix or are the elements of the input vector.
The operator $\frac{\partial(\cdot)}{\partial \zeta}$ denotes the derivative with respect to $\zeta$, which corresponds to the Wirtinger derivative when $\zeta$ is a complex variable.
We denote by $\operatorname{sign}(\cdot)$ the elementwise sign function, and $\Re(\cdot)$ and $\Im(\cdot)$ return the real and the imaginary parts of the input, respectively.
The operator $\operatorname{card}(\mathcal{C})$ represents the cardinality of set $\mathcal{C}$. 
The oprator $\Vert\cdot\Vert_2$ denotes the $\ell_2$ norm of vectors.
The operator $\mathbb{E}\left\{\cdot\right\}$ denotes the expectation.
The Kronecker product and the Hadamard product are denoted by $\otimes$ and $\circ$, respectively.
$Q(x)=\frac{1}{\sqrt{2\pi}}\int_x^\infty e^{-\frac{u^2}{2}}\,\operatorname{d}\!u$ denotes the tail distribution function of the standard Gaussian distribution.
We use $\mathbf{a}\sim\mathcal{CN}\left(\mathbf{0},\mathbf{C}_{\mathbf{aa}}\right)$ to specify that $\mathbf{a}$ follows a circularly symmetric complex Gaussian distribution
with zero mean and covariance matrix $\mathbf{C}_{\mathbf{aa}}$.
We denote the identity matrix of size $N$ by  $\mathbf{I}_N$, and the all-ones and all-zeros vectors of size $N$ are denoted by $\mathbf{1}_N$ and $\mathbf{0}_N$, respectively.
We use $\mathbb{R}^{M\times N}$ and $\mathbb{C}^{M\times N}$ to denote the set of real and complex matrices of dimension $M\times N$, respectively.
We denote by $\mathbf{T}_{M,N}\in\mathbb{R}^{MN\times MN}$ the commutation matrix that, for $\mathbf{A}\in\mathbb{C}^{M\times N}$, transforms $\operatorname{vec}(\mathbf{A})$ into $\operatorname{vec}(\mathbf{A}^T)$.
The imaginary unit is denoted by $j$.

\section{System Model}\label{section II}
We consider a MIMO ISAC system, where the transmitter with $N_t$ antennas and the receiver with $N_r$ antennas are collocated to transmit ISAC signals and simultaneously perform target sensing via the received echoes. Moreover, to facilitate a low-cost and power-efficient hardware implementation, we assume that one-bit ADCs are employed at the ISAC receiver. An illustration of the considered quantized MIMO ISAC system is depicted in Fig.~\ref{fig.system_model} at the top of the next page.

\subsection{Communication Model and Performance Metric}\label{subsection II A}
The ISAC transmitter communicates with $K$ single-antenna user equipments (UEs) via a downlink MIMO transmission, as shown in Fig.~\ref{fig.system_model}. Denote the UE information signal by $\mathbf{S}\in\mathcal{S}^{K\times L}$, where $\mathcal{S}$ denotes the set of constellation points and $L$ is the block length. In particular, we consider the commonly used $M$-ary quadrature amplitude modulation (QAM) constellation and accordingly define $\mathcal{S}$ as
\begin{equation}
\mathcal{S}\triangleq \bigg\{\left.s_R+js_I\right\vert s_R,s_I\in\left\{\pm1,\dots,\pm \left(\sqrt{M}-1\right)\right\}\bigg\}.
\end{equation}
Let $\mathbf{X}=\left[\mathbf{x}_1,\dots,\mathbf{x}_L\right]\in\mathbb{C}^{N_t\times L}$ denote the transmit waveform matrix. The corresponding UE received signal can be given by
\begin{equation}\label{UE received signal Y}
\mathbf{Y}=\mathbf{H}\mathbf{X}+\mathbf{W},
\end{equation}
where $\mathbf{H}=\left[\mathbf{h}_1,\dots \mathbf{h}_K\right]^H\in\mathbb{C}^{K\times N_t}$ represents the downlink MIMO channel matrix, with $\mathbf{h}_k^H$ being the channel between the ISAC transmitter and the $k$-th UE, and $\mathbf{W}$ denotes the additive white Gaussian noise (AWGN) at the UE receiver satisfying $\operatorname{vec}\left(\mathbf{W}\right)\sim\mathcal{CN}\left(\mathbf{0}_{KL},\sigma_w^2\mathbf{I}_{KL}\right)$, with $\sigma_w^2$ denoting the noise power.

\begin{figure}[t]
	\centering
	\includegraphics[scale=0.5]{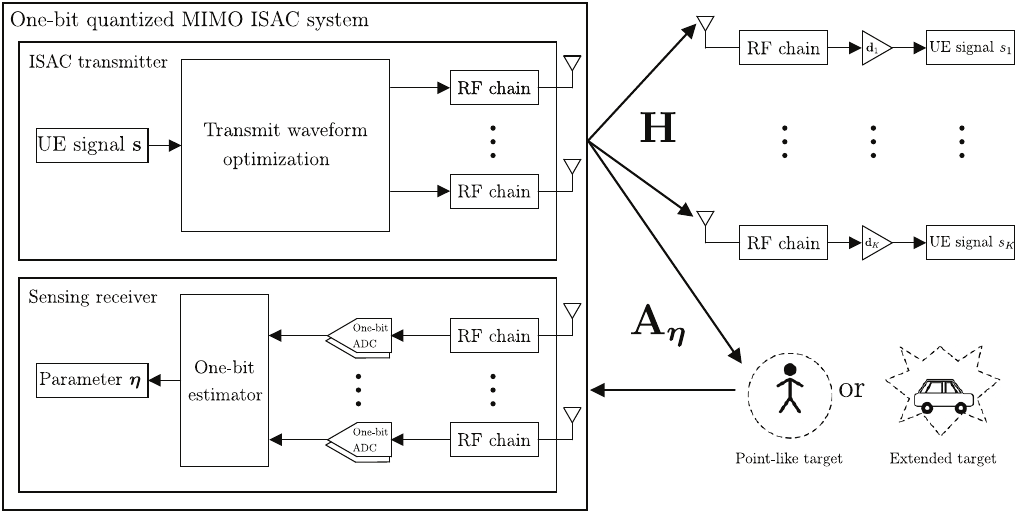}
	\caption{The considered MIMO ISAC system with one-bit ADCs.}
	\label{fig.system_model}
\end{figure}

To improve the detection performance for QAM scheme, each UE utilizes two real decision variables to equalize the real and the imaginary parts of the received signal in \eqref{UE received signal Y}. Specifically, by defining $\mathbf{d}_R=\left[d_{R,1},\dots,d_{R,K}\right]^T\in\mathbb{R}^K$ and $\mathbf{d}_I=\left[d_{I,1},\dots,d_{I,K}\right]^T\in\mathbb{R}^K$ with $(d_{R,k}, d_{I,k})$ being the pair of decision variables of the $k$-th UE, we can express the real and the imaginary parts of the reconstructed information signal $\hat{\mathbf{S}}$ as
\begin{equation}
\begin{bmatrix}
\Re(\hat{\mathbf{S}})\\
\Im(\hat{\mathbf{S}})
\end{bmatrix}
=\operatorname{diag}(\mathbf{d})^{-1}\begin{bmatrix}\Re\left(\mathbf{H}\right)\Re({\mathbf{X}})-\Im\left(\mathbf{H}\right)\Im({\mathbf{X}})+\Re({\mathbf{W}})\\ \Im(\mathbf{H})\Re({\mathbf{X}})+\Re(\mathbf{H})\Im({\mathbf{X}})+\Im({\mathbf{W}})\end{bmatrix} \label{reconstructed signal Shat}
\end{equation}
where we define $\mathbf{d}=\left[\mathbf{d}_R^T,\mathbf{d}_I^T\right]^T\in\mathbb{R}^{2K}$ and exploit the real-valued model of \eqref{UE received signal Y}. From \eqref{reconstructed signal Shat}, we then obtain
\begin{equation}
\begin{aligned}
\Re(\hat{s}_{k,l})&= \Re(\mathbf{h}_k^H\mathbf{x}_l)/d_{R,k}+\Re(w_{k,l})/d_{R,k},\quad k\in\mathcal{K},\ l\in\mathcal{L},\\
\Im(\hat{s}_{k,l})&= \Im(\mathbf{h}_k^H\mathbf{x}_l)/d_{I,k}+\Im(w_{k,l})/d_{I,k},\quad k\in\mathcal{K},\ l\in\mathcal{L},
\end{aligned}
\end{equation}
where $\hat{s}_{k,l}=[\hat{\mathbf{S}}]_{k,l}$ and $\mathcal{K} \triangleq \{1,\dots,K\}$ and $\mathcal{L} \triangleq \{1,\dots,L\}$. To ensure the communication quality-of-service (QoS) of the $k$-th UE, we assume that $\text{SEP}_{k,l}\le \varepsilon_k, l\in\mathcal{L}$, where $\text{SEP}_{k,l}$ denotes the SEP corresponding to the hard decision of $s_{k,l}$ and $\varepsilon_k$ is the maximum allowable SEP at the $k$-th UE. Clearly, by noting that $\text{SEP}_{k,l}=1-(1-\text{SEP}_{R,k,l})(1-\text{SEP}_{I,k,l})$ with $\text{SEP}_{R,k,l}$ and $\text{SEP}_{I,k,l}$ denoting the SEP associated with the hard decision of $\Re(s_{k,l})$ and $\Im(s_{k,l})$, respectively, the aforementioned constraints can be equally transformed into
\begin{equation}\label{QoS for UE k}
\begin{aligned}
\text{SEP}_{R,k,l},\ \text{SEP}_{I,k,l} \le 1-\sqrt{1-\varepsilon_k},\quad k\in\mathcal{K},\ l\in\mathcal{L}.
\end{aligned}
\end{equation}
Furthermore, it is shown in \cite{ref43,ref102} that the constraints in \eqref{QoS for UE k} amount to the following linear inequality constraints:
\begin{equation}\label{SEP metric in vectorized form}
\begin{aligned}
	-\mathbf{d}_R+\mathbf{b}_{R,l}\le&\Re\left(\mathbf{H}\mathbf{x}_l\right)-\mathbf{d}_R\circ \Re(\mathbf{s}_l) \le \mathbf{d}_R- \mathbf{a}_{R,l},\ l\in\mathcal{L},\\
	-\mathbf{d}_I+\mathbf{b}_{I,l}\le&\Im\left(\mathbf{H}\mathbf{x}_l\right)-\mathbf{d}_I\circ \Im(\mathbf{s}_l) \le \mathbf{d}_I- \mathbf{a}_{I,l},\ l\in\mathcal{L},\\
	\mathbf{d}_R\ge& \boldsymbol{\gamma},\quad \mathbf{d}_I\ge  \boldsymbol{\gamma},
\end{aligned}
\end{equation}
where $\mathbf{a}_{R,l}\triangleq\left[a_{R,1,l},\dots,a_{R,K,l}\right]^T\in\mathbb{R}^K$ and $\mathbf{b}_{R,l}\triangleq\left[b_{R,1,l},\dots,b_{R,K,l}\right]^T\in\mathbb{R}^K$ are constant parameters, with $a_{R,k,l}$ and $b_{R,k,l}$ given by \cite{ref43,ref102}
\begin{align}
a_{R,k,l}&=\left\{
\begin{aligned}
&\frac{\sigma_w}{\sqrt{2}}Q^{-1}\left(\frac{1-\sqrt{1-\varepsilon_k}}{2}\right),&& \hspace{-0.2cm} \vert \Re(s_{k,l})\vert < \sqrt{M}-1,\\
& -\infty,&& \hspace{-0.5cm} \Re(s_{k,l})=\sqrt{M}-1,\\
& \frac{\sigma_w}{\sqrt{2}}Q^{-1}\left(1-\sqrt{1-\varepsilon_k}\right),&& \hspace{-0.5cm} \Re(s_{k,l})=-(\sqrt{M}-1), \label{a_R_k_l}\end{aligned}\right.\\
b_{R,k,l}&=\left\{
\begin{aligned}
&\frac{\sigma_w}{\sqrt{2}}Q^{-1}\left(\frac{1-\sqrt{1-\varepsilon_k}}{2}\right),&& \hspace{-0.2cm} \vert \Re(s_{k,l})\vert < \sqrt{M}-1,\\
& \frac{\sigma_w}{\sqrt{2}}Q^{-1}\left(1-\sqrt{1-\varepsilon_k}\right),&& \hspace{-0.5cm} \Re(s_{k,l})=\sqrt{M}-1,\\
& -\infty,&& \hspace{-0.5cm} \Re(s_{k,l})=-(\sqrt{M}-1),\label{b_R_k_l}\end{aligned}
\right.
\end{align}
$\mathbf{a}_{I,l}$ and $\mathbf{b}_{I,l}$ can be defined in the same way as $\mathbf{a}_{R,l}$ and $\mathbf{b}_{R,l}$, with $a_{I,k,l}$ and $b_{I,k,L}$ obtained by replacing $\Re(\cdot)$ with $\Im(\cdot)$ in \eqref{a_R_k_l} and \eqref{b_R_k_l}, respectively, and $\boldsymbol{\gamma}\triangleq\left[\gamma_1,\dots,\gamma_K\right]^T\in\mathbb{R}^K$ with $\gamma_k=\frac{\sigma_w}{\sqrt{2}}Q^{-1}\left(\frac{1-\sqrt{1-\varepsilon_k}}{2}\right)$ being the minimum decision value associated the $k$-th UE for inequalities in \eqref{SEP metric in vectorized form} to hold \cite{ref102}. It is worth noting that \eqref{SEP metric in vectorized form} serves as the communication QoS constraints in our proposed ISAC waveform design, as will be shown in Sections \ref{section IV} and \ref{section V}.

\subsection{Sensing Model and Bussgang Decomposition}\label{sec II B}
Recall that the ISAC waveform $\mathbf{X}$ is also intended for target sensing, as illustrated in Fig.~\ref{fig.system_model}, and thus the corresponding echo signal at the ISAC receiver is given by
\begin{equation}
	\begin{aligned}
		\mathbf{R}=\mathbf{A}_{\boldsymbol{\eta}}\mathbf{X}+\mathbf{V},\label{echo eignal R}
	\end{aligned}
\end{equation}
where $\mathbf{A}_{\boldsymbol{\eta}}\in\mathbb{C}^{N_r\times N_t}$ represents the target response matrix, dependent on the array geometry and the target parameter $\boldsymbol{\eta}$ to be estimated, and $\mathbf{V}\in\mathbb{C}^{N_r\times L}$ is the receiver AWGN satisfying $\operatorname{vec}\left(\mathbf{V}\right)\sim\mathcal{CN}\left(\mathbf{0}_{N_rL},\sigma_v^2\mathbf{I}_{N_rL}\right)$, with $\sigma_v^2$ denoting the noise power. Note that the above model embraces a variety of realistic scenarios and here we focus on two common target models, i.e., PT and ET \cite{ref24,ref25} and discuss in detail their CRB metrics for measuring the estimation accuracy in the next section. To proceed, we rewrite \eqref{echo eignal R} in a vectorized form as
\begin{equation}
	\mathbf{r}=\left(\mathbf{X}^T\otimes \mathbf{I}_{N_r}\right)\mathbf{a}_{\boldsymbol{\eta}}+\mathbf{v},\label{echo signal r}
\end{equation}
where $\mathbf{r}=\operatorname{vec}(\mathbf{R})$, $\mathbf{a}_{\boldsymbol{\eta}}=\operatorname{vec}\left(\mathbf{A}_{\boldsymbol{\eta}}\right)$, and $\mathbf{v}=\operatorname{vec}(\mathbf{V})$. Due to the use of one-bit ADCs, the quantized received signal associated with \eqref{echo signal r} can be expressed as
\begin{equation}
	\mathbf{z}=\frac{1}{\sqrt{2}}\operatorname{sign}\left(\Re(\mathbf{r})\right)+\frac{j}{\sqrt{2}}\operatorname{sign}\left(\Im(\mathbf{r})\right).\label{quantized signal z}
\end{equation}
To handle the severe nonlinear distortions of one-bit quantization, we employ the Bussgang decomposition \cite{ref103} to obtain a statistically equivalent linear representation of $\mathbf{z}$. Specifically, assuming that the unquantized signal $\mathbf{r}$ is Gaussian distributed with zero mean and covariance matrix $\mathbf{C}_{\mathbf{rr}}$, which will be justified in Section \ref{section III}, the linearized approximation of $\mathbf{z}$ via the Bussgang decomposition, can be given by
\begin{equation}\label{linearized signal z}
	\mathbf{z}\approx \mathbf{F}\mathbf{r}+\mathbf{q},
\end{equation}
where $\mathbf{F}=\sqrt{\frac{2}{\pi}}\operatorname{diag}\left(\mathbf{C}_{\mathbf{rr}}\right)^{-\frac{1}{2}}$
is the real Bussgang gain matrix, and $\mathbf{q}$ is the quantization noise uncorrelated with $\mathbf{r}$ \cite{ref104}. Moreover, as detailed in \cite{ref104}, $\mathbf{z}$ also has zero mean and its covariance matrix is given by
\begin{equation}
	\mathbf{C}_{\mathbf{zz}}=\frac{2}{\pi}\arcsin\left(\operatorname{diag}\left(\mathbf{C}_{\mathbf{rr}}\right)^{-\frac{1}{2}}\mathbf{C}_{\mathbf{rr}}\operatorname{diag}\left(\mathbf{C}_{\mathbf{rr}}\right)^{-\frac{1}{2}}\right).\label{covariance matrix Czz}
\end{equation}

\section{One-Bit Parameter Estimation: \\ CRB Analysis and Method Design}\label{section III}
In this section, we first derive novel CRB metrics for both PT and ET to characterize their parameter estimation performance under one-bit quantization by exploiting the Bussgang-based analysis in Section \ref{sec II B}. Then we present practical estimation methods based on the binary observations in \eqref{quantized signal z} for both target models to approach their respective theoretical CRBs.

\subsection{Derivation of One-Bit CRB for Point-Like Targets}\label{subsection III A}
For PT, the target response matrix $\mathbf{A}_{\boldsymbol{\eta}}$ can be given by $\mathbf{A}_{\boldsymbol{\eta}}=\alpha\mathbf{a}_r(\theta)\mathbf{a}_t(\theta)^T$ with $\boldsymbol{\eta}\triangleq[\alpha,\theta]^T$, where $\alpha\in\mathbb{C}$ denotes the reflection coefficient accounting for both the round-trip path loss and the radar cross section (RCS) of the target, and $\mathbf{a}_t(\theta)$ and $\mathbf{a}_r(\theta)$ are the array responses corresponding to the ISAC transmit and receive antennas, respectively, with $\theta$ denoting the direction of arrival (DOA) as well as the direction of departure (DOD) for the considered monostatic configuration. For the commonly used uniform linear array (ULA) with half-wavelength antenna spacing, the expressions for  $\mathbf{a}_t(\theta)$ and $\mathbf{a}_r(\theta)$ are, respectively, given by
\begin{equation}
\begin{aligned}
\mathbf{a}_t(\theta)&=\frac{1}{\sqrt{N_t}}\left[1,e^{-j\pi\sin(\theta)},\dots,e^{-j\pi(N_t-1)\sin(\theta)}\right]^T,\\
\mathbf{a}_r(\theta)&=\frac{1}{\sqrt{N_r}}\left[1,e^{-j\pi\sin(\theta)},\dots,e^{-j\pi(N_r-1)\sin(\theta)}\right]^T.
\end{aligned}
\end{equation}
Using the expressions above, we recast the unquantized echo signal $\mathbf{r}$ in \eqref{echo signal r} as
\begin{equation}
	\mathbf{r}=\alpha\mathbf{A}_\theta \mathbf{x}+\mathbf{v},\label{point-like targets r}
\end{equation}
where $\mathbf{x}=\operatorname{vec}(\mathbf{X})$ and we define 
\begin{equation}
	\mathbf{A}_\theta=\mathbf{I}_L\otimes \left(\mathbf{a}_r(\theta)\mathbf{a}_t(\theta)^T\right),\label{simplified notation Atheta}
\end{equation}
for notational convenience. To account for the complex fluctuations of realistic targets, we model the reflection coefficient as $\alpha\sim\mathcal{CN}(\mu_\alpha,\sigma_\alpha^2)$ \cite{ref200}, where the mean $\mu_\alpha$ represents the reflection determined by the target range and the variance $\sigma_\alpha^2$ characterizes the random variations. Consistent with \cite{ref24,ref25, ref26}, we are interested in estimating the DOA $\theta$, while treating $\alpha$ as a nuisance parameter. To facilitate a tractable CRB expression for $\theta$, we first introduce the following lemma.
\begin{lemma}\label{lemma alpha}
For a parameter of interest $\theta$ and a nuisance parameter $\alpha$, the CRB of $\theta$ derived under a zero-mean assumption for $\alpha$ provides a strict upper bound on the CRB derived under a non-zero mean $\alpha$.
\end{lemma}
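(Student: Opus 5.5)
We would prove Lemma~\ref{lemma alpha} by working with the unquantized model \eqref{point-like targets r}, in which $\alpha\sim\mathcal{CN}(\mu_\alpha,\sigma_\alpha^2)$ is integrated out. The observation $\mathbf{r}$ is then complex Gaussian with mean $\boldsymbol{\mu}(\theta)=\mu_\alpha\mathbf{A}_\theta\mathbf{x}$ and covariance $\mathbf{C}(\theta)=\sigma_\alpha^2\mathbf{A}_\theta\mathbf{x}\mathbf{x}^H\mathbf{A}_\theta^H+\sigma_v^2\mathbf{I}_{N_rL}$. The key point is that the covariance does not depend on $\mu_\alpha$. The Slepian--Bangs formula gives the Fisher information for $\theta$ as
\begin{equation*}
J(\theta)=\operatorname{tr}\!\left(\mathbf{C}^{-1}\frac{\partial \mathbf{C}}{\partial\theta}\mathbf{C}^{-1}\frac{\partial \mathbf{C}}{\partial\theta}\right)+2\Re\!\left(\frac{\partial \boldsymbol{\mu}^H}{\partial\theta}\mathbf{C}^{-1}\frac{\partial \boldsymbol{\mu}}{\partial\theta}\right).
\end{equation*}
Since the nuisance $\alpha$ is averaged over as a random variable, $\theta$ is the only deterministic parameter and the CRB is $1/J(\theta)$. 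If instead $\mu_\alpha$ is treated as an unknown nuisance, we would use the $\theta$-entry of the inverse joint FIM $[J_{\theta\theta}-J_{\theta\mu}J_{\mu\mu}^{-1}J_{\mu\theta}]^{-1}$; the same argument below still applies, because the covariance part is unchanged.

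The steps are as follows.

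\begin{enumerate}
\item Split $J(\theta)=J_{\mathrm{cov}}(\theta)+J_{\mathrm{mean}}(\theta;\mu_\alpha)$. The first term is exactly the Fisher information obtained under the zero-mean assumption $\mu_\alpha=0$.
\item Show that $J_{\mathrm{mean}}\ge 0$. This holds because $\mathbf{C}^{-1}\succ 0$, so the term is a quadratic form $2\Vert\mathbf{C}^{-1/2}\mu_\alpha\,\partial(\mathbf{A}_\theta\mathbf{x})/\partial\theta\Vert_2^2$.
\item Conclude that $\mathrm{CRB}_{\mu_\alpha\neq 0}=1/(J_{\mathrm{cov}}+J_{\mathrm{mean}})\le 1/J_{\mathrm{cov}}=\mathrm{CRB}_{\mu_\alpha=0}$.
\item In the Schur-complement variant, argue that $J_{\mathrm{mean}}$ minus the coupling term equals the squared norm of the component of $\mathbf{C}^{-1/2}\partial\boldsymbol{\mu}/\partial\theta$ that is orthogonal to $\mathbf{C}^{-1/2}\partial\boldsymbol{\mu}/\partial\mu_\alpha$. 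A squared norm is nonnegative, so the inequality still holds.
\end{enumerate}

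The main obstacle is \emph{strictness}: we must show $J_{\mathrm{mean}}>0$ whenever $\mu_\alpha\neq0$.

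\begin{itemize}
\item \textbf{Known-$\mu_\alpha$ case.} It suffices that $\partial(\mathbf{A}_\theta\mathbf{x})/\partial\theta\neq\mathbf{0}$. We would verify this from the ULA structure. Using $\dot{\mathbf{a}}_r=-j\pi\cos\theta\,\mathbf{D}_r\mathbf{a}_r$ and $\dot{\mathbf{a}}_t=-j\pi\cos\theta\,\mathbf{D}_t\mathbf{a}_t$, with $\mathbf{D}_r$ and $\mathbf{D}_t$ the diagonal index matrices, the derivative vanishes only in the degenerate case $\cos\theta=0$ or $\mathbf{X}^T\mathbf{a}_t=\mathbf{0}$. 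In either case even the zero-mean FIM is zero and both CRBs are infinite, so we would exclude these cases.
\item \textbf{Unknown-$\mu_\alpha$ case.} Strictness requires $\partial\boldsymbol{\mu}/\partial\theta$ not to be collinear with $\mathbf{A}_\theta\mathbf{x}$ in the $\mathbf{C}^{-1}$ inner product. We would check this via the Cauchy--Schwarz equality condition, using the same derivative expressions for $\dot{\mathbf{a}}_r$ and $\dot{\mathbf{a}}_t$. This step is the most delicate, and we would state the lemma under the natural nondegeneracy assumption that $\partial(\mathbf{A}_\theta\mathbf{x})/\partial\theta$ and $\mathbf{A}_\theta\mathbf{x}$ are linearly independent.
\end{itemize}
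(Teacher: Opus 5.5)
Your proposal is correct and follows the argument behind the paper's proof, which only cites Kay's general Gaussian (Slepian--Bangs) CRB: once $\alpha$ is integrated out, the covariance does not depend on $\mu_\alpha$, so a nonzero mean only adds the nonnegative term $2|\mu_\alpha|^2\Vert\mathbf{C}^{-1/2}\partial(\mathbf{A}_\theta\mathbf{x})/\partial\theta\Vert_2^2$ to the Fisher information. One small correction to your strictness discussion: if $\cos\theta\neq0$, $\mathbf{X}^T\mathbf{a}_t=\mathbf{0}$ and $\mathbf{X}^T\mathbf{D}_t\mathbf{a}_t\neq\mathbf{0}$, then only the zero-mean CRB is infinite, not both; strictness also follows without the ULA computation, since $J_{\mathrm{cov}}>0$ forces $\partial\mathbf{C}/\partial\theta\neq\mathbf{0}$ and hence $\partial(\mathbf{A}_\theta\mathbf{x})/\partial\theta\neq\mathbf{0}$.
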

\begin{IEEEproof}
	Please refer to \cite{ref201}.
\end{IEEEproof}

Accordingly, we assume a zero-mean $\alpha\sim\mathcal{CN}(0,\sigma_\alpha^2)$, which yields a conservative CRB for $\theta$, and therefore minimizing this conservative bound (as will be seen in Section \ref{section IV}) inherently aligns with a minimax optimization framework. Moreover, it follows from \eqref{point-like targets r} that the unquantized echo signal $\mathbf{r}$ is Gaussian distributed with zero mean, and its covariance matrix $\mathbf{C}_{\mathbf{rr}}$ can be expressed as
\begin{equation}
	\mathbf{C}_{\mathbf{rr}}=\sigma_{\alpha}^2 \mathbf{A}_\theta \mathbf{xx}^H\mathbf{A}_\theta^H+\sigma_v^2 \mathbf{I}_{N_rL}.\label{covariance matrix Crr}
\end{equation}
This justifies the Gaussian assumption required for applying the Bussgang decomposition in \eqref{linearized signal z}. To proceed with the CRB derivation for $\theta$, we further introduce the subsequent lemma.
\begin{lemma}\label{lemma z}
	Although the Bussgang-based observation $\mathbf{z}$ in \eqref{linearized signal z} is not Gaussian, treating it as Gaussian distributed with the exact first- and second-order moments (i.e., zero mean and covariance matrix $\mathbf{C}_{\mathbf{zz}}$ in \eqref{covariance matrix Czz}) yields a worst-case CRB of $\theta$, thereby conforming to the minimax paradigm.
\end{lemma}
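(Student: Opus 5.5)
The plan is to invoke the classical ``Gaussian is the worst case'' result for Fisher information. Among all distributions of an observation with a given parameter-dependent mean and covariance, the Gaussian one has the smallest Fisher information matrix. Consequently its CRB is the largest. We then apply this with the first two moments of $\mathbf{z}$. By \eqref{covariance matrix Czz}, $\mathbf{z}$ has zero mean, and its covariance $\mathbf{C}_{\mathbf{zz}}(\theta)$ depends on $\theta$ through $\mathbf{C}_{\mathbf{rr}}$ in \eqref{covariance matrix Crr}.

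First, we pass to a real-valued representation, $\tilde{\mathbf{z}}=[\Re(\mathbf{z})^T,\Im(\mathbf{z})^T]^T$, so that the standard real-vector argument applies. Let $\mathbf{J}(\theta)$ be the true Fisher information of $\tilde{\mathbf{z}}$, and let $\mathbf{J}_G(\theta)$ be the information of a Gaussian vector with the same mean $\boldsymbol{\mu}(\theta)$ and covariance $\mathbf{C}(\theta)$. The latter is given by the Slepian--Bangs formula
\begin{equation*}
\mathbf{J}_G=\frac{\partial\boldsymbol{\mu}^T}{\partial\theta}\mathbf{C}^{-1}\frac{\partial\boldsymbol{\mu}}{\partial\theta}+\frac{1}{2}\operatorname{tr}\!\Big(\mathbf{C}^{-1}\frac{\partial\mathbf{C}}{\partial\theta}\mathbf{C}^{-1}\frac{\partial\mathbf{C}}{\partial\theta}\Big).
\end{equation*}
In our case $\boldsymbol{\mu}=\mathbf{0}$, so only the trace term survives.

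The key step is to show $\mathbf{J}\ge\mathbf{J}_G$. The approach is a Cauchy--Schwarz argument on the score $s(\tilde{\mathbf{z}})=\partial\ln p(\tilde{\mathbf{z}};\theta)/\partial\theta$. Take any statistic $g(\tilde{\mathbf{z}})$ that is linear plus quadratic in $\tilde{\mathbf{z}}$,
\begin{equation*}
g=\mathbf{a}^T(\tilde{\mathbf{z}}-\boldsymbol{\mu})+(\tilde{\mathbf{z}}-\boldsymbol{\mu})^T\mathbf{B}(\tilde{\mathbf{z}}-\boldsymbol{\mu})-\operatorname{tr}(\mathbf{B}\mathbf{C}).
\end{equation*}
Then $\mathbb{E}\{s\,g\}=\partial\mathbb{E}\{g\}/\partial\theta$ with $\mathbb{E}\{g\}$ held fixed at the correct value. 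Hence this covariance depends only on $\partial\boldsymbol{\mu}/\partial\theta$ and $\partial\mathbf{C}/\partial\theta$, i.e., only on first- and second-order moments. Cauchy--Schwarz gives
\begin{equation*}
\mathbf{J}\ge\frac{(\mathbb{E}\{sg\})^2}{\mathbb{E}\{g^2\}}.
\end{equation*}
Optimizing over $(\mathbf{a},\mathbf{B})$ and choosing the Gaussian-optimal pair $\mathbf{a}=\mathbf{C}^{-1}\partial\boldsymbol{\mu}/\partial\theta$ and $\mathbf{B}=\tfrac12\mathbf{C}^{-1}(\partial\mathbf{C}/\partial\theta)\mathbf{C}^{-1}$ recovers $\mathbf{J}_G$. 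This holds provided $\mathbb{E}\{g^2\}$ matches its Gaussian value.

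That proviso is the main obstacle. $\mathbb{E}\{g^2\}$ involves third and fourth moments of $\tilde{\mathbf{z}}$, and these are not Gaussian for binary $\mathbf{z}$. Two routes are available. The first is to cite the known worst-case result (Stoica--Babu; Park--Serpedin--Qaraqe), which states that the Gaussian assumption minimizes the Fisher information under fixed first two moments. The second, for this zero-mean case, is to note that the entries of $\tilde{\mathbf{z}}$ lie in $\{\pm1/\sqrt2\}$ with symmetric distribution, so third moments vanish. The fourth-moment term then needs bounding, and one route is to compare it with the Isserlis value using the arcsine-law structure. If this comparison proves delicate, I would fall back on the cited general theorem. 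Under regularity conditions (differentiability, and $\mathbf{C}_{\mathbf{zz}}\succ0$ thanks to $\sigma_v^2>0$), that theorem gives $\mathbf{J}\ge\mathbf{J}_G$ directly.

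Finally, the matrix inequality is inverted. Since $\theta$ is scalar after treating $\alpha$ as random with $\sigma_\alpha^2$ known, this reduces to $1/\mathbf{J}\le1/\mathbf{J}_G$. Thus the Gaussian-surrogate CRB upper-bounds the true one. Minimizing it is therefore minimizing a worst-case bound over all distributions consistent with the moments of $\mathbf{z}$, which is the minimax interpretation stated in the lemma.
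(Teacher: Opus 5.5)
\textbf{The gap.} The step that fails is $\mathbf{J}\ge\mathbf{J}_G$. Neither of your two routes can supply it, because in this zero-mean model it is false in general.

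Your opening claim, that the Gaussian minimizes the Fisher information among all laws with a given $\theta$-dependent mean \emph{and} covariance, does not hold once $\theta$ enters the covariance. The Stoica--Babu/Park--Serpedin--Qaraqe result rests on your Cauchy--Schwarz step with the \emph{linear} statistic $g=\mathbf{a}^T(\tilde{\mathbf{z}}-\boldsymbol{\mu})$ alone. That is also all the paper offers: its proof is only a pointer to \cite{ref49,ref202}. With the linear statistic only second moments enter, and you get $\mathbf{J}\ge\frac{\partial\boldsymbol{\mu}^T}{\partial\theta}\mathbf{C}^{-1}\frac{\partial\boldsymbol{\mu}}{\partial\theta}$. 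This equals $\mathbf{J}_G$ only when $\mathbf{C}$ does not depend on $\theta$.

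Here $\boldsymbol{\mu}=\mathbf{0}$, so that bound is vacuous, and everything rests on the quadratic part. The obstacle you identified there is real, for two reasons:
\begin{itemize}
\item For scale/covariance parameters the Gaussian is not least favorable. A unit-variance Laplace scale family carries half the Gaussian information about the scale.
\item For one-bit data $\tilde z_i^2\equiv\frac12$ is deterministic. Under Gaussianity, the diagonal of $\mathbf{B}=\frac12\mathbf{C}^{-1}\frac{\partial\mathbf{C}}{\partial\theta}\mathbf{C}^{-1}$ acts as a variance-reducing control variate; for binary data it is inert. As a result $\mathbb{E}\{g^2\}$ lies \emph{above} its Isserlis value, not below it.
\end{itemize}

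\textbf{Counterexample inside the paper's own model.} Take $N_r=1$, $N_t=2$, $L=2$ and $\mathbf{x}_1=\mathbf{x}_2=[1,1]^T$. Then $\mathbf{C}_{\mathbf{rr}}=\sigma_\alpha^2c(\theta)\mathbf{1}_2\mathbf{1}_2^T+\sigma_v^2\mathbf{I}_2$ with $c(\theta)=1+\cos(\pi\sin\theta)$. Because the off-diagonal entry is real, the real-part sign pair and the imaginary-part sign pair are independent. Each has pmf $\Pr(a,b)=(1+ab\rho)/4$, where $\rho=\frac{2}{\pi}\arcsin\frac{\sigma_\alpha^2c}{\sigma_\alpha^2c+\sigma_v^2}$, and $\mathbf{C}_{\mathbf{zz}}$ has unit diagonal and off-diagonal entry $\rho$.

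With $\rho'\triangleq\partial\rho/\partial\theta$:
\begin{itemize}
\item The exact Fisher information is $2\rho'^2/(1-\rho^2)$.
\item The $\mathcal{CN}(\mathbf{0},\mathbf{C}_{\mathbf{zz}})$ surrogate gives $2(1+\rho^2)\rho'^2/(1-\rho^2)^2$.
\end{itemize}
The surrogate is larger by the factor $(1+\rho^2)/(1-\rho^2)>1$ whenever $\rho\neq0$ and $\rho'\neq0$, for example at $\theta=30^\circ$. This is exactly the factor by which $\mathbb{E}\{g^2\}$ exceeds its Gaussian value for your Gaussian-optimal $g$.

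So the surrogate CRB is strictly \emph{smaller} than the true one-bit CRB, the opposite of a worst case. No choice of $(\mathbf{a},\mathbf{B})$ and no citation can deliver $\mathbf{J}\ge\mathbf{J}_G$ here. Your fallback reproduces the paper's justification, but that justification has the same hole.

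A correct version would have to do one of two things. It could drop the bound claim and use the Gaussian surrogate only as a tractable design proxy. Or it could confine the worst-case guarantee to the mean term of the information, which is trivial when $\boldsymbol{\mu}=\mathbf{0}$.
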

\begin{IEEEproof}
	Please refer to \cite{ref49,ref202}.
\end{IEEEproof}
Thus, by invoking Lemmas \ref{lemma alpha} and \ref{lemma z}, we then obtain the worst-case likelihood function for estimating $\theta$, whose CRB expression, according to \cite[Chapter 15]{ref201}, can be given by 
\begin{equation}
	\text{CRB}_\theta=\operatorname{tr}\left(\mathbf{C}_{\mathbf{zz}}^{-1}\frac{\partial \mathbf{C}_{\mathbf{zz}}}{\partial \theta}\mathbf{C}_{\mathbf{zz}}^{-1}\frac{\partial \mathbf{C}_{\mathbf{zz}}}{\partial \theta}\right)^{-1}.\label{CRB with arcsin Czz}
\end{equation}
Note that the above one-bit CRB of $\theta$ involves calculating the inverse and the derivative of $\mathbf{C}_{\mathbf{zz}}$ in \eqref{covariance matrix Czz}, which does not admit a tractable form due to the $\arcsin$ function. Here, we utilize the approximation $\arcsin(x)\approx x, \ x\ne1$ \cite{ref44}, and thus the covariance matrix $\mathbf{C}_{\mathbf{zz}}$ in \eqref{covariance matrix Czz} can be approximated as
\begin{equation}\label{approximated covariance Czz}
	\begin{aligned}	
		\mathbf{C}_{\mathbf{zz}}&\approx \mathbf{F}\mathbf{C}_{\mathbf{rr}}\mathbf{F}+\left(1-\frac{2}{\pi}\right)\mathbf{I}_{N_rL}
		\triangleq\hat{\mathbf{C}}_{\mathbf{zz}}.
	\end{aligned}
\end{equation}
Using this result, we can therefore obtain a more tractable expression for $\text{CRB}_\theta$ given by
\begin{equation}\label{point-like targets CRB}
	\text{CRB}_\theta\approx \operatorname{tr}\left(\hat{\mathbf{C}}_{\mathbf{zz}}^{-1}\frac{\partial \hat{\mathbf{C}}_{\mathbf{zz}}}{\partial \theta}\hat{\mathbf{C}}_{\mathbf{zz}}^{-1}\frac{\partial \hat{\mathbf{C}}_{\mathbf{zz}}}{\partial \theta}\right)^{-1},
\end{equation}
where $\frac{\partial \hat{\mathbf{C}}_{\mathbf{zz}}}{\partial \theta}=\frac{\partial \mathbf{F}}{\partial\theta}\mathbf{C}_{\mathbf{rr}}\mathbf{F}+\mathbf{F}\frac{\partial \mathbf{C}_{\mathbf{rr}}}{\partial \theta}\mathbf{F}+\mathbf{F}\mathbf{C}_{\mathbf{rr}}\frac{\partial \mathbf{F}}{\partial\theta}$, and $\frac{\partial \mathbf{F}}{\partial\theta}$ and $\frac{\partial \mathbf{C}_{\mathbf{rr}}}{\partial \theta}$ are, respectively, given in the following expressions:
\begin{align}
	&\frac{\partial \mathbf{F}}{\partial\theta}=-\frac{1}{2}\sqrt{\frac{2}{\pi}}\operatorname{diag}\left(\mathbf{C}_{\mathbf{rr}}\right)^{-1}\operatorname{diag}\left(\frac{\partial \mathbf{C}_{\mathbf{rr}}}{\partial \theta}\right)\operatorname{diag}\left(\mathbf{C}_{\mathbf{rr}}\right)^{-\frac{1}{2}},\label{F theta derivative}\\
	&\frac{\partial \mathbf{C}_{\mathbf{rr}}}{\partial \theta}=\sigma_{\alpha}^2\frac{\partial \mathbf{A}_\theta}{\partial \theta}\mathbf{xx}^H\mathbf{A}_\theta^H+\sigma_\alpha^2\mathbf{A}_\theta \mathbf{xx}^H\frac{\partial \mathbf{A}_\theta^H}{\partial \theta},\label{Crr theta derivative}
\end{align}
where 
\begin{equation}
	\frac{\partial \mathbf{A}_\theta}{\partial \theta}=\mathbf{I}_L\otimes\left(\frac{\partial \mathbf{a}_r(\theta)}{\partial \theta}\mathbf{a}_t(\theta)^T+\mathbf{a}_r(\theta)\frac{\partial \mathbf{a}_t(\theta)^T}{\partial \theta}\right),\label{A theta derivative}
\end{equation}
with
$\frac{\partial \mathbf{a}_t(\theta)}{\partial \theta}=\mathbf{a}_t(\theta)\circ \left[0,-j\pi\cos(\theta),\dots,-j\pi(N_t-1)\cos(\theta)\right]^T$ and $\frac{\partial \mathbf{a}_r(\theta)}{\partial \theta}=\mathbf{a}_r(\theta)\circ \left[0,-j\pi\cos(\theta),\dots,-j\pi(N_r-1)\cos(\theta)\right]^T$.

\subsection{Derivation of One-Bit CRB for Extended Targets}\label{subsection III B}
We continue by deriving the one-bit CRB metric for ET, where the target response matrix can be described as $\mathbf{A}_{\boldsymbol{\eta}}=\sum_{j=1}^J\alpha_j\mathbf{a}_r(\theta_j)\mathbf{a}_t(\theta_j)^T$, with $\boldsymbol{\eta}\triangleq\left[\alpha_1,\dots,\alpha_J,\theta_1,\dots,\theta_J\right]^T$ and $J$ being the number of scatterers of ET. Similar to the PT case detailed in Section \ref{subsection III A}, $\alpha_j$, $\mathbf{a}_t(\theta_j)$, and $\mathbf{a}_r(\theta_j)$ represent the reflection coefficient, the transmit array response, and the receive array response associated with the $j$-th scatterer, respectively. For ET, we still focus on estimating the DOAs $\theta_1,\dots,\theta_J$ and treat $\alpha_1,\dots,\alpha_J$ as nuisance parameters. By invoking Lemma \ref{lemma alpha}, the above nuisance parameters can be treated as Gaussian distributed and therefore the target response matrix $\mathbf{A}_{\boldsymbol{\eta}}$ follows a complex Gaussian distribution, i.e., satisfying $\operatorname{vec}\left(\mathbf{A}_{\boldsymbol{\eta}}\right)\sim\mathcal{CN}\left(\mathbf{0}_{N_rN_t},\mathbf{C}_{\mathbf{aa}}\right)$ with a known prior covariance $\mathbf{C}_{\mathbf{aa}}$. Moreover, instead of estimating $\theta_1,\dots,\theta_J$ directly, we adopt the methodology in \cite{ref24,ref25} by first estimating the entire target response matrix $\mathbf{A}_{\boldsymbol{\eta}}$, and thus the desired DOAs can be extracted from the estimate matrix via spectral estimation techniques, such as the one-bit MUSIC algorithm discussed in\cite{ref204}.

To estimate $\mathbf{A}_{\boldsymbol{\eta}}$, by recalling \eqref{echo signal r} and \eqref{linearized signal z}, we first express the Bussgang-based linearized signal model as
\begin{equation}
	\mathbf{z}=\mathbf{F}\tilde{\mathbf{X}}\mathbf{a}_{\boldsymbol{\eta}}+\tilde{\mathbf{v}},\label{linearized observation z}
\end{equation}
where we define $\tilde{\mathbf{X}}\triangleq\mathbf{X}^T\otimes \mathbf{I}_{N_r}$, and $\tilde{\mathbf{v}}\triangleq\mathbf{Fv}+\mathbf{q}$ is the effective noise uncorrelated with $\mathbf{a}_{\boldsymbol{\eta}}$. Furthermore, $\tilde{\mathbf{v}}$ has zero mean and its covariance matrix $\mathbf{C}_{\tilde{\mathbf{v}}\tilde{\mathbf{v}}}$, by noting that $\mathbf{a}_{\boldsymbol{\eta}}\sim\mathcal{CN}\left(\mathbf{0}_{N_rN_t},\mathbf{C}_{\mathbf{aa}}\right)$ and exploiting $\hat{\mathbf{C}}_{\mathbf{zz}}$ in \eqref{approximated covariance Czz}, can be approximated as
\begin{equation}
\begin{aligned}
\mathbf{C}_{\tilde{\mathbf{v}}\tilde{\mathbf{v}}}&\approx	\hat{\mathbf{C}}_{\mathbf{zz}}-\mathbf{F}\tilde{\mathbf{X}}\mathbf{C}_{\mathbf{aa}}\tilde{\mathbf{X}}^H\mathbf{F}\\
&=\mathbf{F}\mathbf{C}_{\mathbf{rr}}\mathbf{F}+\left(1-\frac{2}{\pi}\right)\mathbf{I}_{N_rL}-\mathbf{F}\tilde{\mathbf{X}}\mathbf{C}_{\mathbf{aa}}\tilde{\mathbf{X}}^H\mathbf{F}\\
&=\sigma_v^2\mathbf{FF}+\left(1-\frac{2}{\pi}\right)\mathbf{I}_{N_rL},\label{covariance matrix Cvv}
\end{aligned}
\end{equation}
where the last equality is due to $\mathbf{C}_{\mathbf{rr}}=\tilde{\mathbf{X}}\mathbf{C}_{\mathbf{aa}}\tilde{\mathbf{X}}^H+\sigma_v^2\mathbf{I}_{N_rL}$, as can be derived from \eqref{echo signal r}. Again we assume that $\tilde{\mathbf{v}}\sim\mathcal{CN}\left(\mathbf{0}_{N_rL},\mathbf{C}_{\tilde{\mathbf{v}}\tilde{\mathbf{v}}}\right)$, thereby leading to the worst-case (i.e., largest) CRB for estimating $\mathbf{a}_{\boldsymbol{\eta}}$ based on \eqref{linearized observation z}, as detailed in \cite{ref202}. Accordingly, the derived one-bit CRB for ET can be formulated as \cite[Chapter 15]{ref201}
\begin{equation}\label{extended targets CRB}
	\begin{aligned}
		\text{CRB}_{\mathbf{a}_{\boldsymbol{\eta}}}&=\operatorname{tr}\left(\left(\mathbf{C}_{\mathbf{aa}}^{-1}+\tilde{\mathbf{X}}^H\mathbf{F}\mathbf{C}_{\tilde{\mathbf{v}}\tilde{\mathbf{v}}}^{-1}\mathbf{F}\tilde{\mathbf{X}}\right)^{-1}\right).
	\end{aligned}
\end{equation}
Notably, since the derivation of $\text{CRB}_{\mathbf{a}_{\boldsymbol{\eta}}}$ incorporates the prior knowledge of $\mathbf{a}_{\boldsymbol{\eta}}$ (i.e., the covariance matrix $\mathbf{C}_{\mathbf{aa}}$), it formally constitutes a posterior or Bayesian CRB \cite{ref3,ref26}. Furthermore, this bound is analytically equivalent to the mean-squared error (MSE) achieved by the linear minimum mean-squared error (LMMSE) estimator that estimates $\mathbf{a}_{\boldsymbol{\eta}}$ under the linearized model in \eqref{linearized observation z}, which has been studied for quantized MIMO systems in \cite{ref44} to analyze the channel estimation performance.

\subsection{Proposed One-Bit Estimation Methods}\label{subsection III C}
We now develop practical one-bit estimation methods for both PT and ET, which are based on the quantized observations in \eqref{quantized signal z} and approach the CRBs derived in \eqref{point-like targets CRB} and \eqref{extended targets CRB}, respectively. First, for the PT scenario, we propose a maximum likelihood estimator (MLE) for the DOA estimation. Specifically, by leveraging the assumption that $\mathbf{z}\sim\mathcal{CN}\left(\mathbf{0}_{N_rL},{\mathbf{C}}_{\mathbf{zz}}\right)$ detailed in Section \ref{subsection III A}, the corresponding log-likelihood function can be given by
\begin{equation}
	L(\mathbf{z};\theta)=-\mathbf{z}^H{\mathbf{C}}_{\mathbf{zz}}^{-1}\mathbf{z}-\ln\det(\pi{\mathbf{C}}_{\mathbf{zz}}),
\end{equation}
where ${\mathbf{C}}_{\mathbf{zz}}$ is defined in \eqref{covariance matrix Czz}. Thus, the one-bit MLE of $\theta$ can be obtained by maximizing $	L(\mathbf{z};\theta)$ with respect to $\theta$, which yields
\begin{equation}\label{MLE theta}
	\hat{\theta}_{\text{MLE}}=\mathop{\text{argmin}}\limits_{\theta\in[-\pi/2,\pi/2]}\ \mathbf{z}^H{\mathbf{C}}_{\mathbf{zz}}^{-1}\mathbf{z}+\ln\det({\mathbf{C}}_{\mathbf{zz}}), 
\end{equation}
where we ignore the constant term.
The optimal solution to \eqref{MLE theta} can be readily found via a one-dimensional search.

For the ET scenario, we are to estimate the target response matrix $\mathbf{A}_{\boldsymbol{\eta}}$ (i.e., $\mathbf{a}_{\boldsymbol{\eta}}$), as detailed in Section \ref{subsection III B}. Since the Bussgang-based signal $\mathbf{z}$ in \eqref{linearized observation z} essentially constitutes a linear observation of $\mathbf{a}_{\boldsymbol{\eta}}$, the LMMSE estimator can be utilized to obtain an estimate of $\mathbf{a}_{\boldsymbol{\eta}}$, which gives
\begin{equation}\label{LMMSE a}
	\begin{aligned}
		\hat{\mathbf{a}}_{\boldsymbol{\eta},\text{LMMSE}}=\mathbf{C}_{\mathbf{az}}\mathbf{C}_{\mathbf{zz}}^{-1}\mathbf{z},
	\end{aligned}
\end{equation}
with $\mathbf{C}_{\mathbf{az}}=\mathbf{C}_{\mathbf{aa}}\tilde{\mathbf{X}}^H\mathbf{F}$ and $\mathbf{C}_{\mathbf{zz}}$ defined in \eqref{covariance matrix Czz}.

It should be pointed out that the above Bussgang-based LMMSE (BLMMSE) estimator incurs a marginal performance degradation compared to the optimal minimum MSE (MMSE) estimator under one-bit quantization \cite{ref205}. However, unlike the analytically intractable MMSE approach, the BLMMSE estimator is inherently tied to the closed-form performance metric $\text{CRB}_{\mathbf{a}_{\boldsymbol{\eta}}}$ in \eqref{extended targets CRB}, which is of paramount importance, as it highly facilitates the efficient waveform design in the subsequent sections. Finally, the practical effectiveness of the proposed estimators in \eqref{MLE theta} and \eqref{LMMSE a} will be numerically validated in Section \ref{section VI}.

\section{ISAC Waveform Optimization for Point-Like Targets}\label{section IV}
In this section, we first formulate the bi-criterion ISAC waveform optimization problem for PT, which turns out to be a very challenging nonconvex problem due to the nonlinear CRB objective and the coupled variables within the SEP constraints. To address these issues, we develop an efficient ADMM \cite{ref301} based algorithm, wherein the constructed subproblems in each iteration are solved by exploiting MM techniques and structured convexity.

\subsection{ISAC Problem Formulation}\label{subsection IV A}
Without loss of generality, we assume that $\varepsilon_k=\varepsilon$ and $\gamma_k=\gamma$ for $k\in\mathcal{K}$ and recast the SEP constraints in \eqref{SEP metric in vectorized form} by an abstract form $g(\mathbf{x},\mathbf{d})\le c(\varepsilon,\gamma)$ to simplify the notation, where $\mathbf{x}=\operatorname{vec}(\mathbf{X})=\left[\mathbf{x}_1^T,\dots,\mathbf{x}_L^T\right]^T$, $g(\cdot)$ is linear with respect to $\mathbf{x}$ and $\mathbf{d}$, and $c(\cdot)$ is  dependent on $\varepsilon$ and $\gamma$. Thus, by minimizing the one-bit CRB objective in \eqref{point-like targets CRB} and further imposing the above SEP constraint and a total power constraint, we formulate the bi-criterion ISAC waveform optimization problem for PT as
\begin{subequations}\label{ISAC problem PT DAC}
\begin{align}
\mathop{\text{maximize}}\limits_{\mathbf{x},\mathbf{d}}&\quad \operatorname{tr}\left(\hat{\mathbf{C}}_{\mathbf{zz}}^{-1}(\mathbf{x})\frac{\partial \hat{\mathbf{C}}_{\mathbf{zz}}(\mathbf{x})}{\partial \theta}\hat{\mathbf{C}}_{\mathbf{zz}}^{-1}(\mathbf{x})\frac{\partial \hat{\mathbf{C}}_{\mathbf{zz}}(\mathbf{x})}{\partial \theta}\right)\label{ISAC problem PT DAC I}\\
\text{subject to}
&\quad g(\mathbf{x},\mathbf{d})\le c(\varepsilon,\gamma) \label{ISAC problem PT DAC II}\\
&\quad \|\mathbf{x}\|_2^2\le P,\label{ISAC problem PT DAC III}
\end{align}
\end{subequations}
where $P$ is the total power budget at the ISAC transmitter. As can be seen, problem \eqref{ISAC problem PT DAC} is challenging to tackle due to the highly nonconvex objective in \eqref{ISAC problem PT DAC I} and the coupled variables $\mathbf{x}$ and $\mathbf{d}$ in \eqref{ISAC problem PT DAC II}.

\subsection{ADMM-Based Solution}\label{subsection IV B}
We now present an efficient ADMM-based algorithm to solve problem \eqref{ISAC problem PT DAC}. First, by introducing an auxiliary variable $\mathbf{U}=\mathbf{HX}=\left[\mathbf{u}_1,\dots,\mathbf{u}_L\right]\in\mathbb{C}^{K\times L}$ and defining $\mathbf{u}=\operatorname{vec}(\mathbf{U})$, we recast problem \eqref{ISAC problem PT DAC} by an equivalent form as follows:
\begin{subequations}\label{ISAC problem PT equality}
\begin{align}
\mathop{\text{minimize}}\limits_{\mathbf{x},\mathbf{u},\mathbf{d}}&\, f(\mathbf{x})\triangleq -\operatorname{tr}\left(\hat{\mathbf{C}}_{\mathbf{zz}}^{-1}(\mathbf{x})\frac{\partial \hat{\mathbf{C}}_{\mathbf{zz}}(\mathbf{x})}{\partial \theta}\hat{\mathbf{C}}_{\mathbf{zz}}^{-1}(\mathbf{x})\frac{\partial \hat{\mathbf{C}}_{\mathbf{zz}}(\mathbf{x})}{\partial \theta}\right)\label{ISAC problem PT equality I}\\
\text{subject to}
&\quad \|\mathbf{x}\|_2^2\le P \label{ISAC problem PT equality II}\\
&\quad \tilde{g}(\mathbf{u},\mathbf{d})\le c(\varepsilon,\gamma) \label{ISAC problem PT equality III}\\
&\quad \mathbf{u}=\tilde{\mathbf{H}}\mathbf{x},\label{ISAC problem PT equality IV}
\end{align}
\end{subequations}
where we define  $\tilde{g}(\mathbf{u},\mathbf{d})=g(\mathbf{x},\mathbf{d})$ and $\tilde{\mathbf{H}}=\mathbf{I}_L\otimes\mathbf{H}$. Then the augmented Lagrangian with respect to problem \eqref{ISAC problem PT equality} can be cast as \cite{ref302}
\begin{equation}
\begin{aligned}	
L_{\rho}(\mathbf{x},\mathbf{u},\boldsymbol{\lambda})
&= f(\mathbf{x})+2\Re\left(\bar{\boldsymbol{\lambda}}^H(\tilde{\mathbf{H}}\mathbf{x}-\mathbf{u})\right)+\rho\|\tilde{\mathbf{H}}\mathbf{x}-\mathbf{u}\|_2^2\\
&=f(\mathbf{x})+\rho \|\tilde{\mathbf{H}}\mathbf{x}-\mathbf{u}+\boldsymbol{\lambda}\|_2^2-\rho\|\boldsymbol{\lambda}\|_2^2,
\end{aligned}
\end{equation}
where $\bar{\boldsymbol{\lambda}}$ is the dual variable associated with the equality constraint in \eqref{ISAC problem PT equality IV} and we further denote by $\boldsymbol{\lambda}=\frac{1}{\rho}\bar{\boldsymbol{\lambda}}$ the scaled dual variable \cite{ref301}, and $\rho>0$ is the penalty parameter. According to the ADMM framework \cite{ref301}, we then arrive at the following subproblems in the $i$-th ADMM iteration:
\begin{equation}\label{ADMM subproblem I}
\begin{aligned}
\hspace{1cm}\mathbf{x}^{i+1}&=\mathop{\text{argmin}}\limits_{\mathbf{x}}\quad f(\mathbf{x})+\rho\Vert\tilde{\mathbf{H}}\mathbf{x}-\mathbf{u}^i+\boldsymbol{\lambda}^i\Vert_2^2\\
&\quad \text{subject to}\quad \eqref{ISAC problem PT equality II},
\end{aligned}
\end{equation}
\begin{equation}\label{ADMM subproblem II}
\begin{aligned}
\hspace{-1.58cm}\left\{\mathbf{u}^{i+1},\mathbf{d}^{i+1}\right\}&=\mathop{\text{argmin}}\limits_{\mathbf{u},\mathbf{d}}\quad \rho\Vert\tilde{\mathbf{H}}\mathbf{x}^{i+1}-\mathbf{u}+\boldsymbol{\lambda}^i\Vert_2^2\\ 
&\quad \text{subject to}\quad \eqref{ISAC problem PT equality III}, 
\end{aligned}
\end{equation}
\begin{equation}\label{ADMM dual ascend}
\begin{aligned}
\hspace{-1.1cm}\boldsymbol{\lambda}^{i+1}&= \boldsymbol{\lambda}^i + \left(\tilde{\mathbf{H}}\mathbf{x}^{i+1}-\mathbf{u}^{i+1}\right).
\end{aligned}
\end{equation}
Clearly, we see that the efficacy of the developed ADMM framework hinges on whether subproblems \eqref{ADMM subproblem I} and \eqref{ADMM subproblem II} can be solved efficiently, which will be elaborated as follows.
\subsubsection{Solution to the $\mathbf{x}$-Subproblem}
By using \eqref{ISAC problem PT equality I}, we first reformulate subproblem \eqref{ADMM subproblem I} as
\begin{equation}\label{PLA subproblem I A}
\begin{aligned}
\mathop{\text{minimize}}\limits_{\mathbf{x}}&\quad -\operatorname{tr}\left(\hat{\mathbf{C}}_{\mathbf{zz}}^{-1}(\mathbf{x})\frac{\partial \hat{\mathbf{C}}_{\mathbf{zz}}(\mathbf{x})}{\partial \theta}\hat{\mathbf{C}}_{\mathbf{zz}}^{-1}(\mathbf{x})\frac{\partial \hat{\mathbf{C}}_{\mathbf{zz}}(\mathbf{x})}{\partial \theta}\right)\\&\quad+\rho\Vert\tilde{\mathbf{H}}\mathbf{x}-\mathbf{u}^i+\boldsymbol{\lambda}^i\Vert_2^2\\
\text{subject to}&\quad \eqref{ISAC problem PT equality II}.
\end{aligned}
\end{equation}
Applying $\operatorname{tr}\left(\mathbf{ABCD}\right)=\operatorname{vec}\left(\mathbf{A}^H\right)^H\left(\mathbf{D}^T\otimes \mathbf{B}\right)\operatorname{vec}\left(\mathbf{C}\right)$, we then recast problem \eqref{PLA subproblem I A} in an equivalent form as
\begin{equation}\label{PLA subproblem I B}
	\begin{aligned}
		\mathop{\text{minimize}}\limits_{\mathbf{x}}&\quad -\mathbf{p}(\mathbf{x})^H\mathbf{Q}(\mathbf{x})^{-1}\mathbf{p}(\mathbf{x})
 +\rho\Vert\tilde{\mathbf{H}}\mathbf{x}-\mathbf{u}^i+\boldsymbol{\lambda}^i\Vert_2^2
		\\
		\text{subject to}&\quad \eqref{ISAC problem PT equality II},
	\end{aligned}
\end{equation}
where we define $\mathbf{p}(\mathbf{x})=\frac{\partial \operatorname{vec}\left(\hat{\mathbf{C}}_{\mathbf{zz}}(\mathbf{x})\right)}{\partial \theta}$ and $\mathbf{Q}(\mathbf{x})=\hat{\mathbf{C}}_{\mathbf{zz}}^{T}(\mathbf{x})\otimes \hat{\mathbf{C}}_{\mathbf{zz}}(\mathbf{x})$. Note that the notation $(\mathbf{x})$ is dropped for simplicity in the rest of this section. As can be seen, the nonconvex objective makes problem \eqref{PLA subproblem I B} quite challenging. To address this issue, we first employ the MM technique \cite{ref303} to construct a surrogate form of problem \eqref{PLA subproblem I B}. More specifically, by noting that $-\mathbf{p}^H\mathbf{Q}^{-1}\mathbf{p}$ is jointly concave in $\mathbf{p}$ and $\mathbf{Q}$ \cite{ref304}, an upper bound of the objective of problem \eqref{PLA subproblem I B} can be derived via its first-order Taylor approximation \cite{ref303}, which gives rise to 
\begin{equation}\label{PT subproblem lower bound}
\begin{aligned}
&\quad -\mathbf{p}^H\mathbf{Q}^{-1}\mathbf{p}+\rho\Vert\tilde{\mathbf{H}}\mathbf{x}-\mathbf{u}^i+\boldsymbol{\lambda}^i\Vert_2^2\\
&\le -2\Re\left\{\mathbf{p}_t^H\mathbf{Q}_t^{-1}\mathbf{p}\right\}+\operatorname{tr}\left(\mathbf{Q}_t^{-1}\mathbf{p}_t\mathbf{p}_t^H\mathbf{Q}_t^{-1}\mathbf{Q}\right)\\
&\quad +\rho\Vert\tilde{\mathbf{H}}\mathbf{x}-\mathbf{u}^i+\boldsymbol{\lambda}^i\Vert_2^2+c_t,
\end{aligned}
\end{equation}
where the subscript $(\cdot)_t$ denotes the value at the $t$-th iteration, i.e., $\mathbf{p}_t=\mathbf{p}(\mathbf{x}_t), \mathbf{Q}_t=\mathbf{Q}(\mathbf{x}_t)$, and we denote by $c_t$ the constant term that does not affect the solution. Using the upper bound in \eqref{PT subproblem lower bound}, we therefore obtain a surrogate form of problem \eqref{PLA subproblem I B} expressed as
\begin{equation}\label{PLA subproblem I C}
	\begin{aligned}
		\mathop{\text{minimize}}\limits_{\mathbf{x}}&\quad \quad\  m(\mathbf{p},\mathbf{Q},\mathbf{x})\\
		&\quad \triangleq
		-2\Re\left\{\mathbf{p}_t^H\mathbf{Q}_t^{-1}\mathbf{p}\right\}+\operatorname{tr}\left(\mathbf{Q}_t^{-1}\mathbf{p}_t\mathbf{p}_t^H\mathbf{Q}_t^{-1}\mathbf{Q}\right)\\
		&\quad \quad +\rho\Vert\tilde{\mathbf{H}}\mathbf{x}-\mathbf{u}^i+\boldsymbol{\lambda}^i\Vert_2^2\\
		\text{subject to}&\quad \eqref{ISAC problem PT equality II}.
	\end{aligned}
\end{equation}
According to the MM framework, problem \eqref{PLA subproblem I B} can be solved by iteratively solving the surrogate problem in \eqref{PLA subproblem I C}. Although the objective of problem \eqref{PLA subproblem I C} is more tractable than that of problem \eqref{PLA subproblem I B}, it is still nonconvex in $\mathbf{x}$ and also difficult to tackle. Hence, we further adopt the projected gradient descent (PGD) method \cite{ref305} to solve problem \eqref{PLA subproblem I C}. To reduce the computational complexity, we only perform one PGD iteration as follows:
\begin{equation}\label{PGD update equation}
	\begin{aligned}
		\mathbf{x}_{t+1}=\mathcal{P}_C\left(\mathbf{x}_t-\mu\frac{\frac{\partial m(\mathbf{p},\mathbf{Q},\mathbf{x})}{\partial \mathbf{x}^*}}{\left\Vert\frac{\partial m(\mathbf{p},\mathbf{Q},\mathbf{x})}{\partial \mathbf{x}^*}\right\Vert_2}\right),
	\end{aligned}
\end{equation}
where $\mathcal{P}_C(\cdot)$ is the projection operator depending on the constraint in \eqref{ISAC problem PT equality II} and is thus given by
\begin{equation}\label{projection operator of box constraints}
\mathcal{P}_C(\mathbf{x})=\begin{cases}
	\mathbf{x}, &\|\mathbf{x}\|_2^2 \le P,\\
	\frac{\sqrt{P}}{\|\mathbf{x}\|_2}\mathbf{x}, &\|\mathbf{x}\|_2^2 > P,
\end{cases}	
\end{equation}
and $\mu$ is the step size given by the following backtracking line search method \cite{ref306}:
\begin{equation}
	\begin{aligned}
		&\quad\ m(\mathbf{p}_{t+1},\mathbf{Q}_{t+1},\mathbf{x}_{t+1})-m(\mathbf{p}_{t},\mathbf{Q}_{t},\mathbf{x}_t)\\
		&\le \frac{2\mu}{\left\Vert\frac{\partial m(\mathbf{p},\mathbf{Q},\mathbf{x})}{\partial \mathbf{x}^*}\right\Vert_2} \Re\left(\left(\frac{\partial m(\mathbf{p},\mathbf{Q},\mathbf{x})}{\partial \mathbf{x}^*}\right)^H\left(\mathbf{x}_{t+1}-\mathbf{x}_t\right)\right).\label{PGD step size}
	\end{aligned}
\end{equation}
Note that the above PGD method requires the expression for $\frac{\partial m(\mathbf{p},\mathbf{Q},\mathbf{x})}{\partial \mathbf{x}^*}$, which is derived in Appendix \ref{appendix derivative}.

\subsubsection{Solution to the $\left\{\mathbf{u},\mathbf{d}\right\}$-Subproblem}
Substituting the expressions of \eqref{SEP metric in vectorized form} into the constraint in \eqref{ISAC problem PT equality III}, subproblem \eqref{ADMM subproblem II} can be rewritten as
\begin{equation}\label{PT subproblem II A}
\begin{aligned}
\mathop{\text{minimize}}\limits_{\mathbf{u},\mathbf{d}}\quad& \left\Vert\mathbf{u}-\left(\tilde{\mathbf{H}}\mathbf{x}^{i+1}+\boldsymbol{\lambda}^i\right)\right\Vert_2^2	\\
\text{subject to}\quad &\\
-\mathbf{d}_R+\mathbf{b}_{R,l}&\le\Re\left(\mathbf{u}_l\right)-\mathbf{d}_R\circ \Re(\mathbf{s}_l) \le \mathbf{d}_R- \mathbf{a}_{R,l},\quad l\in\mathcal{L}\\
-\mathbf{d}_I+\mathbf{b}_{I,l}&\le\Im\left(\mathbf{u}_l\right)-\mathbf{d}_I\circ \Im(\mathbf{s}_l) \le \mathbf{d}_I- \mathbf{a}_{I,l},\quad l\in\mathcal{L}\\
\mathbf{d}_R&\ge \gamma \mathbf{1}_K,\quad \mathbf{d}_I\ge  \gamma \mathbf{1}_K.
\end{aligned}	
\end{equation}
Evidently problem \eqref{PT subproblem II A} is a solvable convex optimization problem. Although this problem can be solved via the interior point method (IPM) \cite{ref304}, it incurs a significant computational cost, thereby hindering its practical implementation for large-scale systems. In the remainder of this subsection, we develop a low-complexity problem-specific algorithm to find a high-quality solution to problem \eqref{PT subproblem II A}.

To start with, we reformulate problem \eqref{PT subproblem II A} as 
\begin{equation}\label{PT subproblem II B}
\begin{aligned}
\mathop{\text{minimize}}\limits_{\substack{\left\{\Re(\mathbf{u}_l)\right\}_{l=1}^L\\\left\{\Im(\mathbf{u}_l)\right\}_{l=1}^L\\\mathbf{d}_R,\mathbf{d}_I}}\quad &  \sum\limits_{l=1}^L\left(\Vert\Re(\mathbf{u}_l)-\Re(\tilde{\boldsymbol{\lambda}}_l^i)\Vert_2^2+\Vert\Im(\mathbf{u}_l)-\Im(\tilde{\boldsymbol{\lambda}}_l^i)\Vert_2^2\right)	\\
\text{subject to}\quad &\\
-\mathbf{d}_R+\mathbf{b}_{R,l}&\le\Re\left(\mathbf{u}_l\right)-\mathbf{d}_R\circ \Re(\mathbf{s}_l) \le \mathbf{d}_R- \mathbf{a}_{R,l},\quad l\in\mathcal{L}\\
-\mathbf{d}_I+\mathbf{b}_{I,l}&\le\Im\left(\mathbf{u}_l\right)-\mathbf{d}_I\circ \Im(\mathbf{s}_l) \le \mathbf{d}_I- \mathbf{a}_{I,l},\quad l\in\mathcal{L}\\
\mathbf{d}_R&\ge \gamma \mathbf{1}_K,\quad \mathbf{d}_I\ge  \gamma \mathbf{1}_K,
\end{aligned}	
\end{equation}
where we define $\left[\tilde{\boldsymbol{\lambda}}_1^i,\dots,\tilde{\boldsymbol{\lambda}}_L^i\right]=\operatorname{unvec}\left(\tilde{\mathbf{H}}\mathbf{x}^{i+1}+\boldsymbol{\lambda}^i\right)\in\mathbb{C}^{K\times L}$. Obviously, problem \eqref{PT subproblem II B} can be divided into $2K$ independent small-scale subproblems. More specifically, by defining $\tilde{\mathbf{u}}_l = \left[\Re(\mathbf{u}_l)^T, \Im(\mathbf{u}_l)^T\right]^T\in\mathbb{R}^{2K}$, $\boldsymbol{\chi}_l^i = \left[\Re(\tilde{\boldsymbol{\lambda}}_l^i)^T,  \Im(\tilde{\boldsymbol{\lambda}}_l^i)^T\right]^T\in\mathbb{R}^{2K}$, $\tilde{\mathbf{a}}_l = \left[\mathbf{a}_{R,l}^T,  \mathbf{a}_{I,l}^T\right]^T\in\mathbb{R}^{2K}$, $\tilde{\mathbf{b}}_l = \left[\mathbf{b}_{R,l}^T,  \mathbf{b}_{I,l}^T\right]^T\in\mathbb{R}^{2K}$, and $\tilde{\mathbf{s}}_l = \left[\Re(\mathbf{s}_l)^T,  \Im(\mathbf{s}_l)^T\right]^T\in\mathbb{R}^{2K}$, we then express the $k$-th subproblem as follows:
\begin{equation}\label{PT subproblem II scalar}
\begin{aligned}
\mathop{\text{minimize}} \limits_{\left\{\tilde{u}_{k,l}\right\}_{l=1}^L,d_k}&\  \sum\limits_{l=1}^L\left(\tilde{u}_{k,l}-\chi_{k,l}^i\right)^2\\
\text{subject to}&\ 
-{d}_k+\tilde{{b}}_{k,l}\le \tilde{u}_{k,l}-{d}_k \tilde{s}_{k,l} \le {d}_k- \tilde{{a}}_{k,l},\quad l\in\mathcal{L}\\
&\hspace{1.5cm} d_k\ge \gamma,
\end{aligned}
\end{equation}
with $\tilde{u}_{k,l}$, $\chi_{k,l}^i$, $\tilde{{a}}_{k,l}$, $\tilde{{b}}_{k,l}$, $\tilde{s}_{k,l}$, and $d_k$ being the $k$-th element of $\tilde{\mathbf{u}}_l$, $\boldsymbol{\chi}_l^i$, $\tilde{\mathbf{a}}_l$, $\tilde{\mathbf{b}}_l$, $\tilde{\mathbf{s}}_l$, and $\mathbf{d}$ (defined in Section \ref{subsection II A}). We can observe that problem \eqref{PT subproblem II scalar} is a convex quadratic problem with $2L+1$ linear inequality constraints. Motivated by the algorithmic framework in \cite[Algorithm 3]{ref102}, we develop a tailored method to tackle problem \eqref{PT subproblem II scalar}, with its optimal solution established in the following proposition.
\begin{proposition}\label{PT proposition I}
Denote the set of candidate solutions to $d_k$ by $\left\{d_{k,i}^\star\right\},i=1,\dots, \operatorname{card}(\mathcal{B})-1$, with $d_{k,i}^\star$ and $\mathcal{B}$ provided in Appendix \ref{appendix subproblem II solution}. Then, the optimal $d_k^\star$ is the one that minimizes $p(d_{k,i}^\star)=\sum_{l\in\Gamma_i} \left((\tilde{s}_{k,l}+1)d_{k,i}^\star-\tilde{a}_{k,l}-\chi_{k,l}^i\right)^2+\sum_{l\in\Omega_i} \left((\tilde{s}_{k,l}-1)d_{k,i}^\star+\tilde{b}_{k,l}-\chi_{k,l}^i\right)^2$, with $\Gamma_i$ and $\Omega_i$ defined in Appendix \ref{appendix subproblem II solution}. Furthermore, given $d_k^\star$, the optimal solution to $\tilde{u}_{k,l}$ for $l\in\mathcal{L}$ can be given by 
\begin{equation}\label{optimal solution to u}
\tilde{u}_{k,l}^\star=\left\{\begin{aligned}
&(\tilde{s}_{k,l}+1)d_k^\star-\tilde{a}_{k,l}  ,\quad && \chi_{k,l}^i> (\tilde{s}_{k,l}+1)d_k^\star-\tilde{a}_{k,l},\\
&(\tilde{s}_{k,l}-1)d_k^\star+\tilde{b}_{k,l}, \quad && \chi_{k,l}^i < (\tilde{s}_{k,l}-1)d_k^\star+\tilde{b}_{k,l},\\
&\chi_{k,l}^i,\quad && \text{otherwise}.
\end{aligned}\right.
\end{equation}
\end{proposition}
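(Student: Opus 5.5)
Problem \eqref{PT subproblem II scalar} will be treated by partial minimization: first over $\{\tilde{u}_{k,l}\}$ for fixed $d_k$, then over the scalar $d_k$. For fixed $d_k\ge\gamma$ the constraints decouple across $l$. Each becomes a box $\tilde{u}_{k,l}\in[\ell_l(d_k),\,h_l(d_k)]$ with
\[
\ell_l(d_k)=(\tilde{s}_{k,l}-1)d_k+\tilde{b}_{k,l},\qquad
h_l(d_k)=(\tilde{s}_{k,l}+1)d_k-\tilde{a}_{k,l}.
\]
The objective is a separable sum of squares, so the minimizer is the projection of $\chi_{k,l}^i$ onto this interval. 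That is exactly the three-case clipping rule \eqref{optimal solution to u}. I will also check that the box is nonempty, i.e., $h_l\ge\ell_l$, which reduces to $2d_k\ge\tilde{a}_{k,l}+\tilde{b}_{k,l}$. Since $a,b\le\gamma$ or equal $-\infty$ by \eqref{a_R_k_l}--\eqref{b_R_k_l}, this holds for all $d_k\ge\gamma$; this is where the threshold $\gamma$ enters. This step establishes the second claim once $d_k^\star$ is known.

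Substituting the projection back gives the reduced function of $d_k$ alone:
\[
p(d_k)=\sum_l\big(\max\{0,\chi_{k,l}^i-h_l(d_k)\}\big)^2+\big(\max\{0,\ell_l(d_k)-\chi_{k,l}^i\}\big)^2 .
\]
Each term is the squared distance from a point to an interval whose endpoints are affine in $d_k$. Equivalently, $p$ is a partial minimization of a jointly convex quadratic over a convex set, so it is convex and continuously differentiable on $[\gamma,\infty)$. I then plan to partition the half-line using the breakpoints where $\chi_{k,l}^i=h_l(d_k)$ or $\chi_{k,l}^i=\ell_l(d_k)$, i.e., $d=(\chi_{k,l}^i+\tilde{a}_{k,l})/(\tilde{s}_{k,l}+1)$ and $d=(\chi_{k,l}^i-\tilde{b}_{k,l})/(\tilde{s}_{k,l}-1)$ whenever the denominators are nonzero. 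Sorting these together with $\gamma$ (and $+\infty$) gives the set $\mathcal{B}$, which has $\operatorname{card}(\mathcal{B})-1$ consecutive intervals.

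On the $i$-th interval the active sets are fixed: $\Gamma_i$ collects the $l$ with $\chi_{k,l}^i>h_l$ (upper constraint active) and $\Omega_i$ those with $\chi_{k,l}^i<\ell_l$. On that interval $p$ coincides with the quadratic stated in the proposition. Its unconstrained minimizer is
\[
\bar d_i=\frac{\sum_{\Gamma_i}(\tilde{s}_{k,l}+1)(\tilde{a}_{k,l}+\chi_{k,l}^i)+\sum_{\Omega_i}(\tilde{s}_{k,l}-1)(\chi_{k,l}^i-\tilde{b}_{k,l})}{\sum_{\Gamma_i}(\tilde{s}_{k,l}+1)^2+\sum_{\Omega_i}(\tilde{s}_{k,l}-1)^2},
\]
and I define $d_{k,i}^\star$ as $\bar d_i$ clipped to the $i$-th interval. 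The global minimizer of $p$ over $[\gamma,\infty)$ lies in some interval and is then the minimizer of the restricted quadratic, so it belongs to $\{d_{k,i}^\star\}$. Comparing $p(d_{k,i}^\star)$ over all candidates therefore yields $d_k^\star$.

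The main obstacle is bookkeeping rather than depth. I must handle degenerate cases: an empty $\Gamma_i\cup\Omega_i$ (where $p\equiv0$ and any point of the interval is optimal), the infinite last interval, terms with $\tilde{a}$ or $\tilde{b}=-\infty$ that never become active, and $\tilde{s}_{k,l}=\pm1$ making a coefficient vanish. I will resolve these by dropping never-active terms from the breakpoint list and treating a zero denominator as "take the left endpoint". Finally, coercivity need only hold on the last interval, which I will argue from $|\tilde{s}_{k,l}|\ge1$ and the sign structure of the clipping.
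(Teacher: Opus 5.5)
Your proposal is correct and follows essentially the same route as the paper's proof in Appendix~\ref{appendix subproblem II solution}. Both arguments project $\chi_{k,l}^i$ onto $[(\tilde{s}_{k,l}-1)d_k+\tilde{b}_{k,l},\,(\tilde{s}_{k,l}+1)d_k-\tilde{a}_{k,l}]$ for fixed $d_k$, split $[\gamma,\infty)$ at the breakpoints into pieces with fixed $\Gamma_i,\Omega_i$, clip $\hat{d}_{k,i}$ to each piece, and compare the values $p(d_{k,i}^\star)$; your box-nonemptiness check ($\tilde{a}_{k,l},\tilde{b}_{k,l}\le\gamma$) is a useful detail the paper leaves implicit.

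One correction: $p$ need not be coercive on the last piece (if every $\tilde{s}_{k,l}=\pm1$ it is eventually constant in $d_k$). You do not need coercivity, however. On each piece $p$ is a sum of squares of affine functions of $d_k$, hence either constant or strictly convex, so its minimum over the piece is attained and $\min_{d_k\ge\gamma}p(d_k)=\min_i p(d_{k,i}^\star)$, which is exactly what your comparison step requires.
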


\begin{IEEEproof}
	See Appendix \ref{appendix subproblem II solution}.
\end{IEEEproof}

\begin{algorithm}[t]
	\caption{ADMM framework integrated with an MM-based PGD method (ADMM-MMPGD).}\label{summary of ADMM}
	\begin{algorithmic}[1]
		\Require{Feasible primal variables $\mathbf{x}^0$, $\mathbf{u}^0$, and $\mathbf{d}^0$, feasible dual variable $\boldsymbol{\lambda}^0$, total power budget $P$, and convergence accuracy $\epsilon$.}
		\State Set ADMM iteration index $i=0$.
		\Repeat
		\State Set MM iteration index $t=0$.
		\State $\mathbf{x}_t=\mathbf{x}^i$.
		\Repeat
		\State Obtain $\mathbf{x}_{t+1}$ via \eqref{PGD update equation}.
		\State $t \gets t+1$.
		\Until{convergence}.
		\State $\mathbf{x}^{i+1}=\mathbf{x}_{t}$.
		\State Obtain $\mathbf{u}^{i+1},\mathbf{d}^{i+1}$ via Proposition \ref{PT proposition I}.
		\State Obtain $\boldsymbol{\lambda}^{i+1}$ via \eqref{ADMM dual ascend}.
		\State $i \gets i+1$.
		\Until{convergence}.
		\Ensure{Optimized waveform $\mathbf{x}$ and decision variable $\mathbf{d}$.}
	\end{algorithmic}		
\end{algorithm}

To conclude this subsection, the proposed ADMM-based framework for solving problem \eqref{ISAC problem PT DAC}, which integrates the MM and PGD techniques, is summarized in Algorithm \ref{summary of ADMM} and referred to as ``ADMM-MMPGD". The computational cost of Algorithm \ref{summary of ADMM} is dominated by the PGD update in \eqref{PGD update equation} during each inner iteration. Specifically, the matrix inversion and multiplication operations involving $N_rL$-dimensional matrices incur a complexity of $\mathcal{O}\left((N_rL)^3\right)$. Thus, by denoting the number of outer and inner iterations of Algorithm \ref{summary of ADMM} as $I_{o}$ and $I_{i}$, respectively, the computational complexity of the ``ADMM-MMPGD" algorithm is $\mathcal{O}\left(I_{o}I_{i}\left(N_rL\right)^3\right)$.

\newcounter{MYtempeqncnt}
\begin{figure*}[ht]
	\normalsize
	\setcounter{MYtempeqncnt}{\value{equation}}
	\setcounter{equation}{46}
	\vspace*{4pt}
	\begin{equation}\label{extended CRB expansion}
		\begin{aligned}
			\text{CRB}_{\mathbf{a}_{\boldsymbol{\eta}}}&=\operatorname{tr}\left(\mathbf{C}_{\mathbf{aa}}\right)-\operatorname{tr}\left(\mathbf{C}_{\mathbf{aa}}\tilde{\mathbf{X}}^H\mathbf{F}\left(\mathbf{F}\tilde{\mathbf{X}}\mathbf{C}_{\mathbf{aa}}\tilde{\mathbf{X}}^H\mathbf{F}+\mathbf{C}_{\tilde{\mathbf{v}}\tilde{\mathbf{v}}}\right)^{-1}\mathbf{F}\tilde{\mathbf{X}}\mathbf{C}_{\mathbf{aa}}\right)\\
			&\approx\operatorname{tr}\left(\mathbf{C}_{\mathbf{aa}}\right)-\operatorname{tr}\left(\mathbf{C}_{\mathbf{aa}}\tilde{\mathbf{X}}^H\left(\tilde{\mathbf{X}}\mathbf{C}_{\mathbf{aa}}\tilde{\mathbf{X}}^H+\sigma_v^2\mathbf{I}_{N_rL}+\left(\frac{\pi}{2}-1\right)\operatorname{diag}\left(\tilde{\mathbf{X}}\mathbf{C}_{\mathbf{aa}}\tilde{\mathbf{X}}^H+\sigma_v^2\mathbf{I}_{N_rL}\right)\right)^{-1}\tilde{\mathbf{X}}\mathbf{C}_{\mathbf{aa}}\right).
		\end{aligned}
	\end{equation}
	\hrulefill
	\setcounter{equation}{\value{MYtempeqncnt}}
\end{figure*}

\begin{remark}\label{penalty strategy}
To improve the convergence performance of Algorithm \ref{summary of ADMM}, we enlarge the penalty parameter $\rho$ progressively, as detailed in \cite{ref301}. Specifically, upon completing the updates of the primal and dual variables ($\mathbf{x}$, $\mathbf{u}$, $\mathbf{d}$, and $\boldsymbol{\lambda}$) at each iteration, the penalty parameter is increased via $\rho\leftarrow c_\rho \rho$, where $c_\rho>1$ denotes a predefined scaling factor. Moreover, the dual variable is rescaled as $\boldsymbol{\lambda}\leftarrow \boldsymbol{\lambda}/c_\rho$ to maintain theoretical consistency \cite{ref301}. This adaptive procedure continues until convergence is achieved or $\rho$ reaches a predefined upper bound $\rho_{\max}$. Note that similar techniques have also been adopted in \cite{ref42,ref43,ref45} to boost algorithmic efficiency.
\end{remark}

\section{ISAC Waveform Optimization for Extended Targets}\label{section V}
In this section, we formulate the bi-criterion ISAC waveform design problem for the ET scenario and then obtain an ADMM-based solution to the resulting problem.

\subsection{ISAC Problem Formulation}
First, by invoking the matrix inverse lemma and approximations in \eqref{covariance matrix Cvv}, we rewrite the one-bit CRB metric for ET in \eqref{extended targets CRB} at the top of the next page. Furthermore, by defining $\mathbf{L}(\mathbf{x})\triangleq \tilde{\mathbf{X}}\mathbf{C}_{\mathbf{aa}}$ and $\mathbf{M}(\mathbf{x})\triangleq\tilde{\mathbf{X}}\mathbf{C}_{\mathbf{aa}}\tilde{\mathbf{X}}^H+\left(\frac{\pi}{2}-1\right)\operatorname{diag}\left(\tilde{\mathbf{X}}\mathbf{C}_{\mathbf{aa}}\tilde{\mathbf{X}}^H\right)+\frac{\pi\sigma_v^2}{2}\mathbf{I}_{N_rL}$, we obtain
\setcounter{equation}{47}
\begin{equation}\label{extended CRB objective}
	\begin{aligned}
		\text{CRB}_{\mathbf{a}_{\boldsymbol{\eta}}}=\operatorname{tr}\left(\mathbf{C}_{\mathbf{aa}}\right)-\operatorname{tr}\left(\mathbf{L}(\mathbf{x})^H\mathbf{M}^{-1}(\mathbf{x})\mathbf{L}(\mathbf{x})\right).
	\end{aligned}
\end{equation}
Then, by taking into account the above CRB objective and the constraints in \eqref{ISAC problem PT DAC II}, \eqref{ISAC problem PT DAC III}, we formulate the bi-criterion ISAC waveform optimization problem for ET as
\begin{subequations}\label{ISAC problem ET DAC}
	\begin{align}
		\mathop{\text{minimize}}\limits_{\mathbf{x},\mathbf{d}}&\quad h(\mathbf{x})\triangleq-\operatorname{tr}\left(\mathbf{L}(\mathbf{x})^H\mathbf{M}^{-1}(\mathbf{x})\mathbf{L}(\mathbf{x})\right)\label{ISAC problem ET DAC I}\\
		\text{subject to}
		&\quad g(\mathbf{x},\mathbf{d})\le c(\varepsilon,\gamma)\label{ISAC problem ET DAC II}\\
		&\quad \|\mathbf{x}\|_2^2\le P,\label{ISAC problem ET DAC III}
	\end{align}
\end{subequations}
which is also a highly nonconvex problem similar to the PT case.

\subsection{ADMM-Based Solution}
Since problem \eqref{ISAC problem ET DAC} has a similar structure as problem \eqref{ISAC problem PT DAC}, we also employ the ADMM framework to solve problem \eqref{ISAC problem ET DAC}. To this end, we first construct the augmented Lagrangian associated with problem \eqref{ISAC problem ET DAC} as follows:
\begin{equation}
L_{\rho}(\mathbf{x},\mathbf{u},\boldsymbol{\lambda})
=h(\mathbf{x})+\rho \|\tilde{\mathbf{H}}\mathbf{x}-\mathbf{u}+\boldsymbol{\lambda}\|_2^2-\rho\|\boldsymbol{\lambda}\|_2^2,
\end{equation}
where $\mathbf{u}$, $\boldsymbol{\lambda}$, $\tilde{\mathbf{H}}$, and $\rho$ are, respectively, the auxiliary variable, the dual variable, the effective channel matrix, and the penalty parameter, as defined in Section \ref{subsection IV B}. Furthermore, the corresponding subproblems in the $i$-th ADMM iteration can be expressed as
\begin{equation}\label{ET ADMM subproblem I}
\begin{aligned}
\hspace{1cm}\mathbf{x}^{i+1}&= \mathop{\text{argmin}}\limits_{\mathbf{x}}\quad h(\mathbf{x})+\rho\Vert\tilde{\mathbf{H}}\mathbf{x}-\mathbf{u}^i+\boldsymbol{\lambda}^i\Vert_2^2\\
&\quad \text{subject to}\quad \eqref{ISAC problem ET DAC III},
\end{aligned}
\end{equation}
\begin{equation}\label{ET ADMM subproblem II}
\begin{aligned}
\hspace{-2cm}\left\{\mathbf{u}^{i+1},\mathbf{d}^{i+1}\right\}&=\mathop{\text{argmin}}\limits_{\mathbf{u},\mathbf{d}}\quad \rho\Vert\tilde{\mathbf{H}}\mathbf{x}^{i+1}-\mathbf{u}+\boldsymbol{\lambda}^i\Vert_2^2\\ 
&\quad \text{subject to}\quad \eqref{ISAC problem ET DAC II},
\end{aligned}
\end{equation}
\begin{equation}\label{Et ADMM dual ascend}
\begin{aligned}
\hspace{-1.1cm}\boldsymbol{\lambda}^{i+1}&= \boldsymbol{\lambda}^i + \left(\tilde{\mathbf{H}}\mathbf{x}^{i+1}-\mathbf{u}^{i+1}\right).
\end{aligned}
\end{equation}
In the remainder of this subsection, we focus on tackling subproblem \eqref{ET ADMM subproblem I}, while subproblem \eqref{ET ADMM subproblem II} shares the same form as subproblem \eqref{ADMM subproblem II} and has been addressed in Section \ref{subsection IV B}.

We begin by reformulating subproblem \eqref{ET ADMM subproblem I} as follows:
\begin{equation}\label{ET subproblem I A}
\begin{aligned}
\mathop{\text{minimize}}\limits_{\mathbf{x}}&\ -\operatorname{tr}\left(\mathbf{L}(\mathbf{x})^H\mathbf{M}(\mathbf{x})^{-1}\mathbf{L}(\mathbf{x})\right)+\rho\|\tilde{\mathbf{H}}\mathbf{x}-\mathbf{u}^i+\boldsymbol{\lambda}^i\|_2^2\\
\text{subject to}&\ \eqref{ISAC problem ET DAC III}.
\end{aligned}
\end{equation}
Obviously, the difficulty of solving problem \eqref{ET subproblem I A} arises from the nonconvexity of its objective with respect to $\mathbf{x}$. Here, again we develop an MM-based iterative algorithm to seek a locally optimal solution, where the constructed surrogate problem for each iteration is provided in the subsequent theorem.
\begin{theorem}\label{ET theorem I}
The MM-based surrogate problem corresponding to problem \eqref{ET subproblem I A} is formulated as follows:
\begin{equation}\label{ET subproblem I B}
\begin{aligned}
	\mathop{\text{minimize}}\limits_{\mathbf{x}}&\quad \left(\lambda_{\max}\left(\bar{\mathbf{M}}_t\right)+\rho\lambda_{\max}\left(\tilde{\mathbf{H}}^H\tilde{\mathbf{H}}\right)\right)\|\mathbf{x}\|_2^2\\&\quad-2\Re\left(\mathbf{m}_t^H\mathbf{x}\right)\\
	\text{subject to}&\quad \eqref{ISAC problem ET DAC III},	
\end{aligned}
\end{equation}
where $\bar{\mathbf{M}}_t$ and $\mathbf{m}_t$ are defined in \eqref{auxiliary variable Mt} and \eqref{expression of mt}, respectively.
\end{theorem}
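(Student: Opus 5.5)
We will build the surrogate in three moves. First, majorize the nonconvex term $-\operatorname{tr}(\mathbf{L}^H\mathbf{M}^{-1}\mathbf{L})$ by a quadratic in $\mathbf{x}$ through its linearization in $(\mathbf{L},\mathbf{M})$. Second, bound the quadratic part by an isotropic term. Third, do the same for the ADMM penalty $\rho\|\tilde{\mathbf{H}}\mathbf{x}-\mathbf{u}^i+\boldsymbol{\lambda}^i\|_2^2$.

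\textbf{Step 1 (linearizing the CRB term).} We use the matrix analogue of the fact invoked for \eqref{PT subproblem lower bound}: $-\operatorname{tr}(\mathbf{L}^H\mathbf{M}^{-1}\mathbf{L})$ is jointly concave in $(\mathbf{L},\mathbf{M})$ for $\mathbf{M}\succ\mathbf{0}$ (matrix-fractional function). Its first-order Taylor expansion at $(\mathbf{L}_t,\mathbf{M}_t)=(\mathbf{L}(\mathbf{x}_t),\mathbf{M}(\mathbf{x}_t))$ is therefore a global upper bound:
\begin{equation*}
-\operatorname{tr}(\mathbf{L}^H\mathbf{M}^{-1}\mathbf{L})\le -2\Re\operatorname{tr}(\mathbf{L}_t^H\mathbf{M}_t^{-1}\mathbf{L})+\operatorname{tr}(\mathbf{M}_t^{-1}\mathbf{L}_t\mathbf{L}_t^H\mathbf{M}_t^{-1}\mathbf{M})+\text{const},
\end{equation*}
with equality at $\mathbf{x}=\mathbf{x}_t$. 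Here $\mathbf{L}(\mathbf{x})=\tilde{\mathbf{X}}\mathbf{C}_{\mathbf{aa}}$ is linear in $\mathbf{x}$, so the first term is a linear function $2\Re(\mathbf{c}_t^H\mathbf{x})$. We obtain $\mathbf{c}_t$ by writing $\tilde{\mathbf{X}}=\mathbf{X}^T\otimes\mathbf{I}_{N_r}$ and using the commutation matrix $\mathbf{T}$ to express $\operatorname{vec}(\tilde{\mathbf{X}})$ as a constant linear map of $\mathbf{x}$. Then $\mathbf{M}(\mathbf{x})$ is a Hermitian quadratic in $\mathbf{x}$ plus a constant. Set $\mathbf{G}_t=\mathbf{M}_t^{-1}\mathbf{L}_t\mathbf{L}_t^H\mathbf{M}_t^{-1}\succeq\mathbf{0}$. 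Then
\begin{equation*}
\operatorname{tr}(\mathbf{G}_t\mathbf{M})=\operatorname{tr}\big((\mathbf{G}_t+(\tfrac{\pi}{2}-1)\operatorname{diag}(\mathbf{G}_t))\tilde{\mathbf{X}}\mathbf{C}_{\mathbf{aa}}\tilde{\mathbf{X}}^H\big)+\text{const},
\end{equation*}
which we rewrite as $\mathbf{x}^H\bar{\mathbf{M}}_t\mathbf{x}$. To do so we apply the identity $\operatorname{tr}(\mathbf{ABCD})=\operatorname{vec}(\mathbf{A}^H)^H(\mathbf{D}^T\otimes\mathbf{B})\operatorname{vec}(\mathbf{C})$ used before, together with the vec-permutation of $\tilde{\mathbf{X}}$. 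This defines $\bar{\mathbf{M}}_t\succeq\mathbf{0}$, which should be the quantity in \eqref{auxiliary variable Mt}; it is PSD because both weight matrices and $\mathbf{C}_{\mathbf{aa}}$ are PSD.

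\textbf{Step 2 (isotropic majorization).} This is the standard MM bound for a Hermitian quadratic form. For $\mathbf{x}^H\bar{\mathbf{M}}_t\mathbf{x}$ we use $(\mathbf{x}-\mathbf{x}_t)^H(\lambda_{\max}(\bar{\mathbf{M}}_t)\mathbf{I}-\bar{\mathbf{M}}_t)(\mathbf{x}-\mathbf{x}_t)\ge0$, which gives
\begin{equation*}
\mathbf{x}^H\bar{\mathbf{M}}_t\mathbf{x}\le\lambda_{\max}(\bar{\mathbf{M}}_t)\|\mathbf{x}\|_2^2-2\Re\big(\mathbf{x}^H(\lambda_{\max}(\bar{\mathbf{M}}_t)\mathbf{I}-\bar{\mathbf{M}}_t)\mathbf{x}_t\big)+\text{const}.
\end{equation*}
The penalty term expands to $\rho\mathbf{x}^H\tilde{\mathbf{H}}^H\tilde{\mathbf{H}}\mathbf{x}-2\rho\Re((\mathbf{u}^i-\boldsymbol{\lambda}^i)^H\tilde{\mathbf{H}}\mathbf{x})+\text{const}$, and we bound it the same way with $\lambda_{\max}(\tilde{\mathbf{H}}^H\tilde{\mathbf{H}})$.

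\textbf{Step 3 (collecting terms).} We gather all linear terms into $-2\Re(\mathbf{m}_t^H\mathbf{x})$, so that
\begin{equation*}
\mathbf{m}_t=-\mathbf{c}_t+(\lambda_{\max}(\bar{\mathbf{M}}_t)\mathbf{I}-\bar{\mathbf{M}}_t)\mathbf{x}_t+\rho(\lambda_{\max}(\tilde{\mathbf{H}}^H\tilde{\mathbf{H}})\mathbf{I}-\tilde{\mathbf{H}}^H\tilde{\mathbf{H}})\mathbf{x}_t+\rho\tilde{\mathbf{H}}^H(\mathbf{u}^i-\boldsymbol{\lambda}^i),
\end{equation*}
matching \eqref{expression of mt} up to sign conventions. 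After dropping constants this yields exactly \eqref{ET subproblem I B}. The result is a majorizer that is tight at $\mathbf{x}_t$ and has the same constraint \eqref{ISAC problem ET DAC III}, so it is a valid MM surrogate. As a byproduct, the surrogate problem is solved in closed form by projecting $\mathbf{m}_t/(\lambda_{\max}(\bar{\mathbf{M}}_t)+\rho\lambda_{\max}(\tilde{\mathbf{H}}^H\tilde{\mathbf{H}}))$ onto the ball via $\mathcal{P}_C$.

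\textbf{Main obstacle.} The delicate step is Step 1. We must justify the joint concavity of the matrix-fractional term; we can cite \cite{ref304} for this, via the Schur complement. The bulk of the work is the explicit vectorization of $\mathbf{L}(\mathbf{x})$ and of the $\operatorname{diag}(\cdot)$ part of $\mathbf{M}(\mathbf{x})$ into forms linear and quadratic in $\mathbf{x}$. We also need to show $\bar{\mathbf{M}}_t$ is Hermitian PSD: the operator $\mathbf{G}\mapsto\mathbf{G}+(\frac{\pi}{2}-1)\operatorname{diag}(\mathbf{G})$ preserves PSD-ness, and the Kronecker structure preserves it as well.
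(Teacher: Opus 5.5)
Your proposal is correct and follows essentially the same route as the paper's proof. Both majorize the jointly concave term $-\operatorname{tr}(\mathbf{L}^H\mathbf{M}^{-1}\mathbf{L})$ by its first-order Taylor expansion, vectorize via commutation matrices (using $\operatorname{tr}(\mathbf{A}\operatorname{diag}(\mathbf{B}))=\operatorname{tr}(\operatorname{diag}(\mathbf{A})\mathbf{B})$ to obtain $\tilde{\mathbf{M}}_t$ and $\bar{\mathbf{M}}_t$), and then apply the $\lambda_{\max}$-based isotropic bound to both quadratic terms before collecting the linear terms into $\mathbf{m}_t$. Your $\mathbf{c}_t$ is exactly $-\mathbf{l}_t$, so your $\mathbf{m}_t$ coincides with \eqref{expression of mt} exactly, not merely up to sign conventions.
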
 

\begin{IEEEproof}
	See Appendix \ref{appendix ET problem solution}.
\end{IEEEproof}

As can be seen, problem \eqref{ET subproblem I B} is clearly a convex quadratic problem, whose optimal solution can be given by
\begin{equation}
	\mathbf{x}_{t+1}=\mathcal{P}_C\left(\frac{\mathbf{m}_t}{\lambda_{\max}\left(\bar{\mathbf{M}}_t\right)+\rho\lambda_{\max}\left(\tilde{\mathbf{H}}^H\tilde{\mathbf{H}}\right)}\right),\label{MM update equation}
\end{equation}
where $\mathcal{P}_C(\cdot)$ is the projector operator given in \eqref{projection operator of box constraints}. Hence, problem \eqref{ISAC problem ET DAC} can be efficiently solved by adopting the proposed Algorithm \ref{summary of ADMM}, where the update of $\mathbf{x}_{t+1}$ is obtained via \eqref{MM update equation} instead of \eqref{PGD update equation}. To distinguish this approach from the ``ADMM-MMPGD" algorithm developed in Section \ref{subsection IV B}, we term this modified ADMM framework exploiting the MM technique to yield a closed-form solution as ``ADMM-MMCF". Furthermore, evaluating $\mathbf{x}_{t+1}$ in \eqref{MM update equation} incurs a complexity of $\mathcal{O}\left((N_rL)^3+N_r^3N_tL^2\right)$ per iteration, dominated by high-dimensional matrix inversions and multiplications. Therefore, the computational complexity of the proposed ADMM-MMCF algorithm is given by $\mathcal{O}\left(I_{o}I_{i}\left((N_rL)^3+N_r^3N_tL^2\right)\right)$.

\section{Simulation Results}\label{section VI}
This section presents simulation results to evaluate the tightness of the proposed one-bit CRB and the effectiveness of the developed ISAC waveform design in both the PT and ET scenarios. Unless otherwise specified, the system parameters are configured with $N_t=N_r=16$ transmit and receive antennas, $K=4$ downlink UEs, and a block length of $L=20$. In particular, for PT, the DOA to be estimated is set to $\theta = 30^\circ$. 
The reflection coefficient $\alpha$ is generated by normalizing a random variable drawn from $\mathcal{CN}(0,1)$.
Moreover, the ET target response matrix $\mathbf{A}_{\boldsymbol{\eta}}$ to be estimated is characterized by the Kronecker model, given by
\begin{equation}
\mathbf{A}_{\boldsymbol{\eta}}=\mathbf{\Phi}_R^{1/2}\mathbf{A}_{\text{i.i.d.}} \mathbf{\Phi}_T^{1/2},
\end{equation}
where the entries of $\mathbf{A}_{\text{i.i.d.}}$ are independent and identically distributed (i.i.d.) complex Gaussian variables with zero mean and unit variance. The receive and transmit correlation matrices, $\mathbf{\Phi}_R$ and $\mathbf{\Phi}_T$, respectively, are generated following the exponential correlation model \cite{ref501} with a correlation coefficient of $0.5$. Additionally, the sensing and communication signal-to-noise ratios (SNRs) are defined as $\frac{P}{\sigma_v^2}$ and $\frac{P}{\sigma_w^2}$, respectively.

\subsection{Convergence Analysis}
We begin by verifying the average convergence of the developed ``ADMM-MMPGD" and ``ADMM-MMCF" algorithms in Fig.~\ref{fig.PT_convergence} and Fig.~\ref{fig.ET_convergence}, respectively, which illustrate the residual, defined as $\|\tilde{\mathbf{H}}\mathbf{x}-\mathbf{u}\|_2^2$, and the objectives $\text{CRB}_{\theta}$ and $\text{CRB}_{\mathbf{a}_{\boldsymbol{\eta}}}$, specified in \eqref{point-like targets CRB} and \eqref{extended targets CRB}, respectively, as a function of the number of iterations under different SEP requirements. 
Specifically, Fig.~\ref{fig.PT_residual_convergnce} shows that the residual $\|\tilde{\mathbf{H}}\mathbf{x}-\mathbf{u}\|_2^2$ decreases as the ``ADMM-MMPGD" algorithm iterates until it reaches final convergence, and meanwhile Fig.~\ref{fig.PT_objective_convergence} shows that the objective $\text{CRB}_{\theta}$ rapidly converges within a few iterations and appears inconsistent with its residual convergence process in Fig.~\ref{fig.PT_residual_convergnce}. This is because the ``ADMM-MMPGD" algorithm initially minimizes the augmented objective, i.e., $\text{CRB}_{\theta}$ plus the penalty $\rho\|\tilde{\mathbf{H}}\mathbf{x}-\mathbf{u}\|_2^2$, with a small penalty parameter $\rho$ to escape poor local minima, and then imposes a large $\rho$ to minimize the residual $\|\tilde{\mathbf{H}}\mathbf{x}-\mathbf{u}\|_2^2$ to enforce residual convergence, thereby yielding a trivial impact on the CRB objective optimization.
Similar observations can also be found in  Fig.~\ref{fig.ET_residual_convergnce} and Fig.~\ref{fig.ET_objective_convergence} for the ``ADMM-MMCF" algorithm, where Fig.~\ref{fig.ET_residual_convergnce} shows that the residual $\|\tilde{\mathbf{H}}\mathbf{x}-\mathbf{u}\|_2^2$ continuously reduces and falls below $10^{-4}$ rapidly and Fig.~\ref{fig.ET_objective_convergence} shows that the CRB objective converges rapidly and remains constant as the optimization process becomes dominated by the residual penalty when $\rho$ enlarges. 

\begin{figure}[t]
	\centering
	\subfloat[Residual $\Vert\tilde{\mathbf{H}}\mathbf{x}-\mathbf{u}\Vert_2^2$]{%
		\includegraphics[scale=0.31]{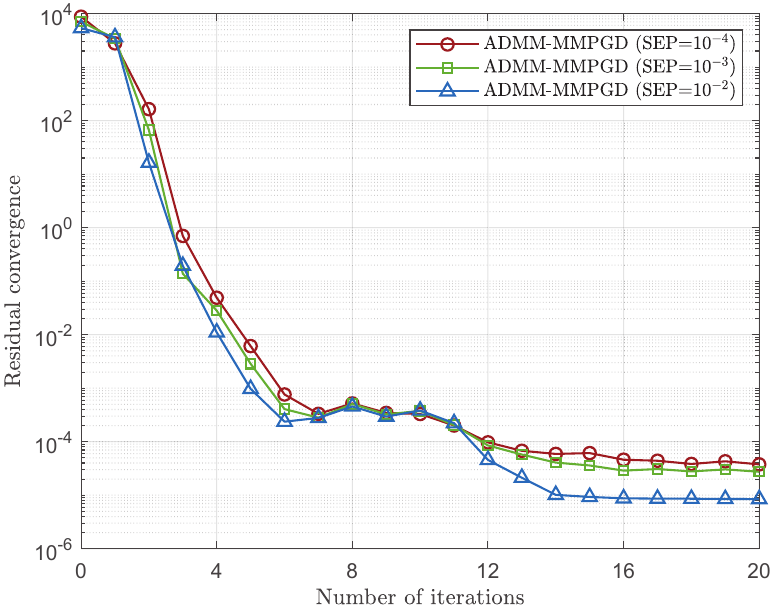}
		\label{fig.PT_residual_convergnce}
	}
	\hfill
	\subfloat[Objective $\text{CRB}_\theta$]{%
		\includegraphics[scale=0.31]{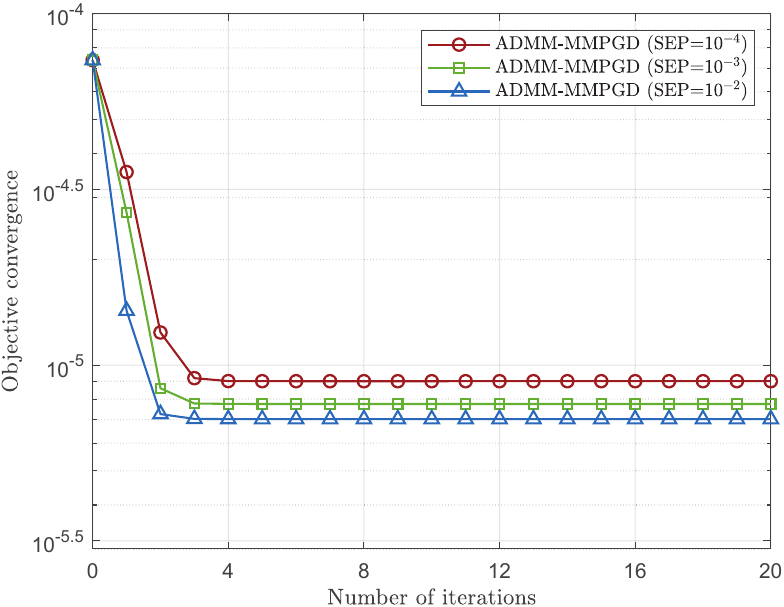}	\label{fig.PT_objective_convergence}
	}
	\caption{Residual and objective convergence of the ADMM-MMPGD algorithm under different SEPs ($N_t=N_r=16, K=4, L=20$, and $\text{SNR}=30\,\text{dB}$).}
	\label{fig.PT_convergence}
\end{figure}

\begin{figure}[t]
	\centering
	\subfloat[Residual $\Vert\tilde{\mathbf{H}}\mathbf{x}-\mathbf{u}\Vert_2^2$]{
		\includegraphics[scale=0.31]{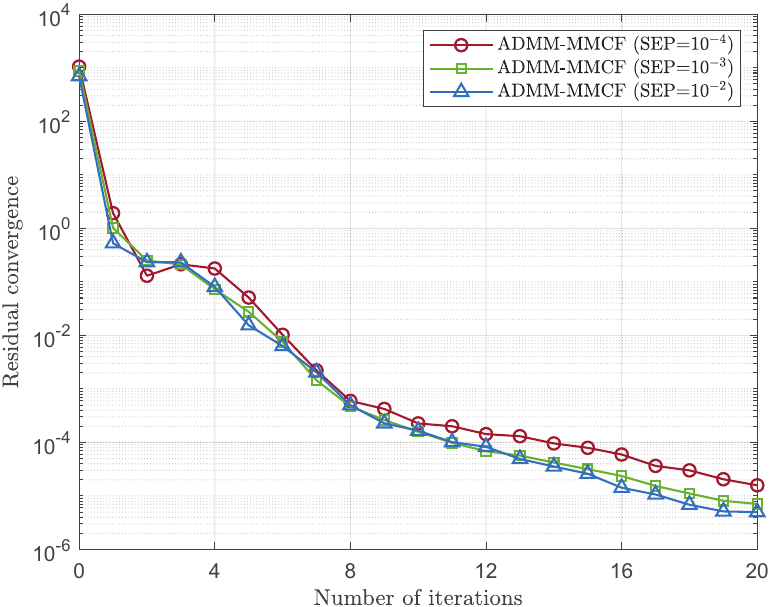}
		\label{fig.ET_residual_convergnce}
	}
	\hfill
	\subfloat[Objective $\text{CRB}_{\mathbf{a}_{\boldsymbol{\eta}}}/\operatorname{tr}(\mathbf{C}_{\mathbf{aa}})$]{
		\includegraphics[scale=0.31]{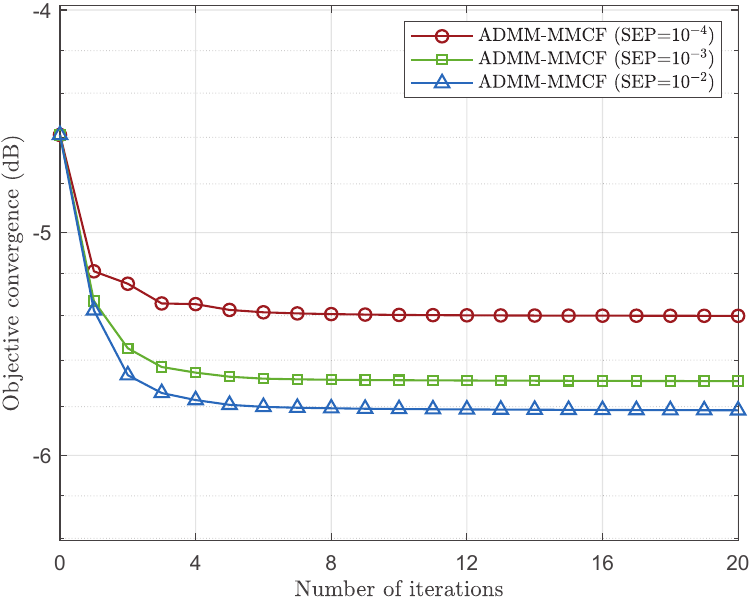}	\label{fig.ET_objective_convergence}
	}
	\caption{Residual and objective convergence of the ADMM-MMCF algorithm under different SEPs ($N_t=N_r=16, K=4, L=20$, and $\text{SNR}=30\,\text{dB}$).}
	\label{fig.ET_convergence}
\end{figure}

\subsection{Sensing Performance Comparison}
We now compare the CRB and MSE performance of the proposed waveforms against benchmark schemes in both the PT and ET scenarios. Since the communication performance is not examined in this simulation, only the inner iterations of the ``ADMM-MMPGD" and the ``ADMM-MMCF" algorithms are invoked, and thus these purely estimation-oriented waveform designs are referred to as the ``MMPGD" and the ``MMCF" algorithms, respectively. For the PT benchmark, we adopt the waveform design from \cite{ref502}, which is obtained by minimizing the infinite-resolution CRB via the off-the-shelf tool CVX \cite{ref503}. For the ET scenario, the benchmark is generated by adapting the ``MMCF" algorithm to solve a quantization-unaware MSE minimization problem, hereafter denoted as ``QU-MMCF". Furthermore, the MSE for PT and the normalized MSE for ET are defined as $\mathbb{E}\left\{\|\hat{\theta}-\theta\|_2^2\right\}$ and $\mathbb{E}\left\{\|\hat{\mathbf{a}}_{\boldsymbol{\eta}}-\mathbf{a}_{\boldsymbol{\eta}}\|_2^2\right\}/\operatorname{tr}(\mathbf{C}_{\mathbf{aa}})$, respectively, with $\hat{\theta}$, $\theta$, $\hat{\mathbf{a}}_{\boldsymbol{\eta}}$, and $\mathbf{a}_{\boldsymbol{\eta}}$ denoting the estimated DOA, the true DOA, the estimated target response, and the true target response, respectively.

\begin{figure}[t]
	\centering
	\includegraphics[scale=0.42]{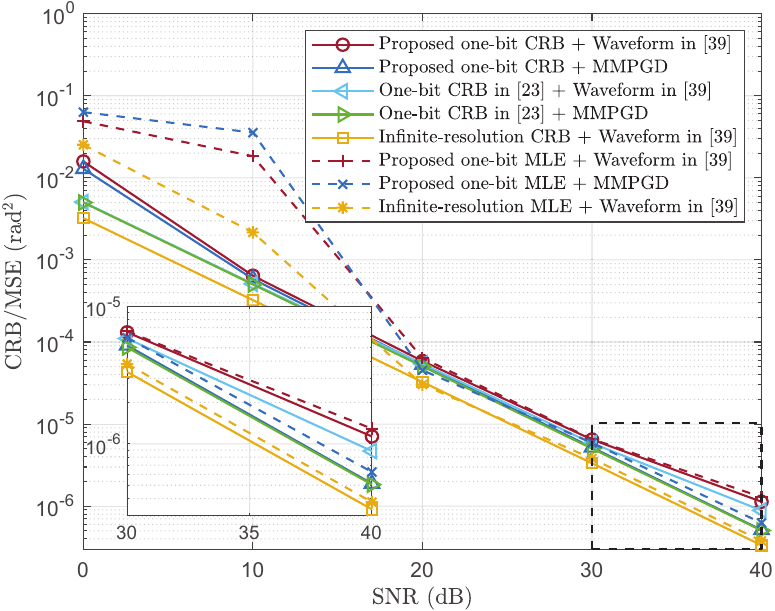}
	\caption{CRB and MSE performance comparisons between the proposed and benchmark waveforms for PT ($N_t=N_r=16$ and $L=20$).}
	\label{fig.PT_estimation}
\end{figure}

Fig.~\ref{fig.PT_estimation} compares the CRB and MSE performance between the proposed and benchmark waveforms for the PT scenario. To facilitate a thorough evaluation, we also include the conventional infinite-resolution CRB derived in \cite[Eq. (63)]{ref504} and the Q-function-based one-bit CRB from \cite[Eq. (15)]{ref49}. As can be seen, for both the proposed and the Q-function-based one-bit CRBs, the ``MMPGD" algorithm outperforms the benchmark waveform in \cite{ref502}, especially at high SNRs. This is because the waveform in \cite{ref502} is designed under the assumption of infinite-resolution quantization, thus suffering a non-negligible performance loss when deployed with one-bit ADCs. 
Furthermore, the Q-function-based one-bit CRB in \cite{ref49} is shown to be lower than the proposed one, which is due to the fact that our proposed Bussgang-based one-bit CRB is derived under the Gaussian assumption, yielding a worst-case lower bound. Nevertheless, unlike the non-analytical Q-function-based bound, the proposed one-bit CRB is mathematically tractable, thereby facilitating efficient waveform optimization. 
In addition, compared to the benchmark waveform with infinite-resolution quantization, a performance degradation of about $-2\,\text{dB}\approx \frac{2}{\pi}$ can be observed for the ``MMPGD" based waveform with one-bit CRB, which quantifies the estimation performance loss due to the use of one-bit ADCs and is also consistent with the analysis result in \cite{ref49}. Lastly, we also see that the MSE of the developed one-bit MLE approaches the proposed one-bit CRB at high SNRs, which resembles the infinite-resolution quantization case and indicates the effectiveness of the proposed one-bit sensing method.

Fig.~\ref{fig.ET_estimation} depicts the normalized CRB, defined as $\text{CRB}_{\mathbf{a}_{\boldsymbol{\eta}}}/\operatorname{tr}(\mathbf{C}_{\mathbf{aa}})$, alongside the normalized MSE performance of the proposed and benchmark waveforms for the ET scenario. First, we can observe that the MSE of the BLMMSE estimator in \eqref{LMMSE a} closely aligns with the theoretical CRB in \eqref{extended targets CRB} across various SNRs and block lengths for both waveforms, which verifies both the tightness of the proposed one-bit CRB for ET and the validity of the BLMMSE estimator. 
Furthermore, the performance of the ``QU-MMCF" scheme evidently degrades at SNRs exceeding $30\,\text{dB}$, whereas the ``MMCF" algorithm demonstrates consistent improvement. This phenomenon is due to the fact that, for the quantization-unaware ``QU-MMCF" scheme, a moderate level of noise power is actually beneficial for parameter estimation, which is known as the \emph{stochastic resonance} effect \cite{ref205}. In contrast, the proposed one-bit CRB explicitly incorporates the quantization effect, enabling the ``MMCF" algorithm to effectively suppress quantization noise and achieve noticeable performance gains.
Additionally, in contrast to the PT results in Fig.~\ref{fig.PT_estimation}, the CRB and MSE curves of the ``MMCF" algorithm in Fig.~\ref{fig.ET_estimation} saturate at SNRs above $40\,\text{dB}$, and the performance gap between the quantized and unquantized schemes also becomes substantially large. This saturation arises because the ET scenario requires estimating a high-dimensional matrix with $N_rN_t$ parameters, as opposed to a single scalar DOA parameter in the PT case. Consequently, the ET estimation problem is intrinsically more sensitive to the quantization precision of the received signals.

\begin{figure}[t]
	\centering
	\includegraphics[scale=0.42]{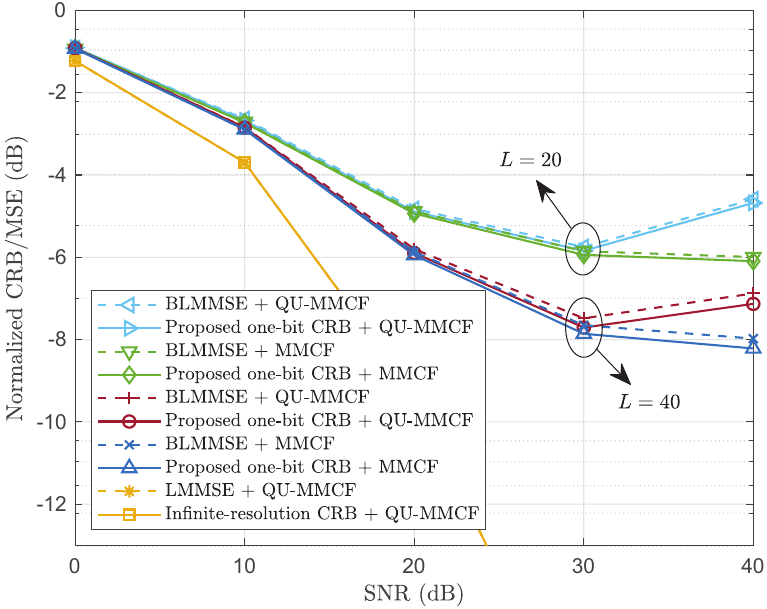}
	\caption{CRB and MSE performance comparisons between the proposed and benchmark waveforms for ET ($N_t=N_r=16$).}
	\label{fig.ET_estimation}
\end{figure}

\subsection{ISAC Performance Evaluation}
This subsection evaluates the ISAC performance of the proposed one-bit waveform designs in both the PT and ET scenarios. The $16$-QAM constellation is employed, and the entries of the downlink channel $\mathbf{H}$ are modeled as i.i.d. complex Gaussian variables with zero mean and unit variance. To initialize the ``ADMM-MMPGD" and ``ADMM-MMCF" algorithms, the transmit waveform $\mathbf{x}$ is randomly generated subject to the power constraint, the dual variable $\boldsymbol{\lambda}$ is initialized as an all-zero vector, and the auxiliary variable $\mathbf{u}$ is computed via \eqref{optimal solution to u} with the initial decision variable $\mathbf{d}=\gamma\mathbf{1}_{2K}$. The convergence accuracy is set to $\epsilon = 10^{-4}$. Moreover, the penalty parameters detailed in Remark \ref{penalty strategy} are configured as $(\rho,c_\rho,\rho_{\max})=(10^2,3,10^{12})$ for the PT scenario and $(\rho,c_\rho,\rho_{\max})=(1,1.1,10)$ for the ET scenario.
To benchmark the proposed methods, we derive two baseline schemes by ignoring the one-bit quantization effect at the sensing receiver. Specifically, for the PT scenario, the one-bit CRB objective in \eqref{ISAC problem PT DAC I} is replaced by its infinite-resolution counterpart, i.e., $\operatorname{tr}\left(\mathbf{C}_{\mathbf{rr}}^{-1}\frac{\partial \mathbf{C}_{\mathbf{rr}}}{\partial \theta}\mathbf{C}_{\mathbf{rr}}^{-1}\frac{\partial \mathbf{C}_{\mathbf{rr}}}{\partial \theta}\right)$, with $\mathbf{C}_{\mathbf{rr}}$ defined in \eqref{covariance matrix Crr}, and the resulting ISAC waveform design is denoted as ``Baseline I". For the ET scenario, the quantization-unaware ``QU-MMCF" scheme (previously evaluated in Fig.~\ref{fig.ET_estimation}) is integrated into the proposed ADMM framework, serving as ``Baseline II". The corresponding ISAC performance results for the PT and ET cases are plotted in Fig.~\ref{fig.PT_ISAC} and Fig.~\ref{fig.ET_ISAC}, respectively.

From Fig.~\ref{fig.PT_ISAC}, we first observe that the performance trade-off between the CRB and the SEP is clearly evident for both the ``ADMM-MMPGD" and ``Baseline I" schemes, where an increase in the SEP requirement $\varepsilon$ leads to an improved one-bit CRB performance. Furthermore, the above trade-off relationship becomes more pronounced at lower SNRs. This behavior is because, at low SNRs, the developed ADMM framework tends to impose a larger penalty term on the CRB objective to satisfy the SEP constraint, which inevitably compromises the sensing performance. In addition, for a given SEP requirement $\varepsilon$, the ``ADMM-MMPGD" algorithm achieves a substantial CRB reduction compared to the ``Baseline I" method, demonstrating the advantage of the proposed ISAC waveform optimization.

\begin{figure}[t]
	\centering
	\subfloat[PT case]{%
		\includegraphics[scale=0.31]{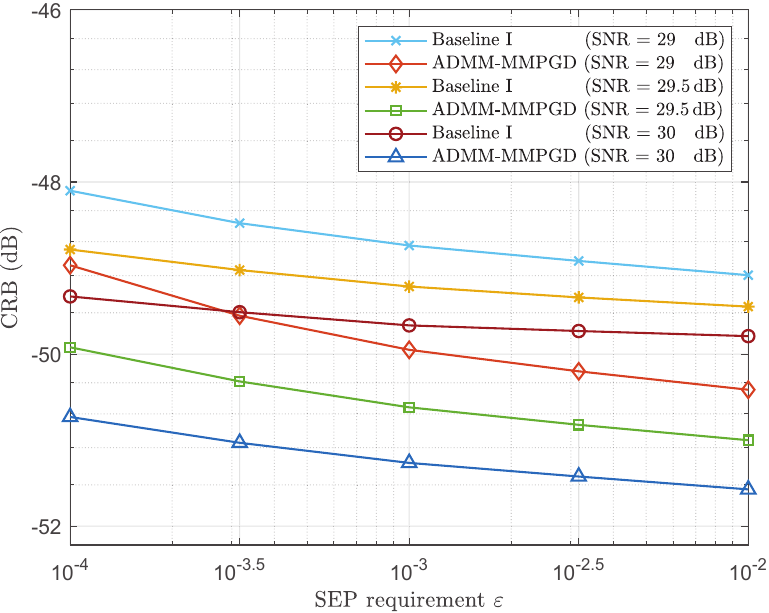}
		\label{fig.PT_ISAC}
	}
	\hfill
	\subfloat[ET case]{%
		\includegraphics[scale=0.31]{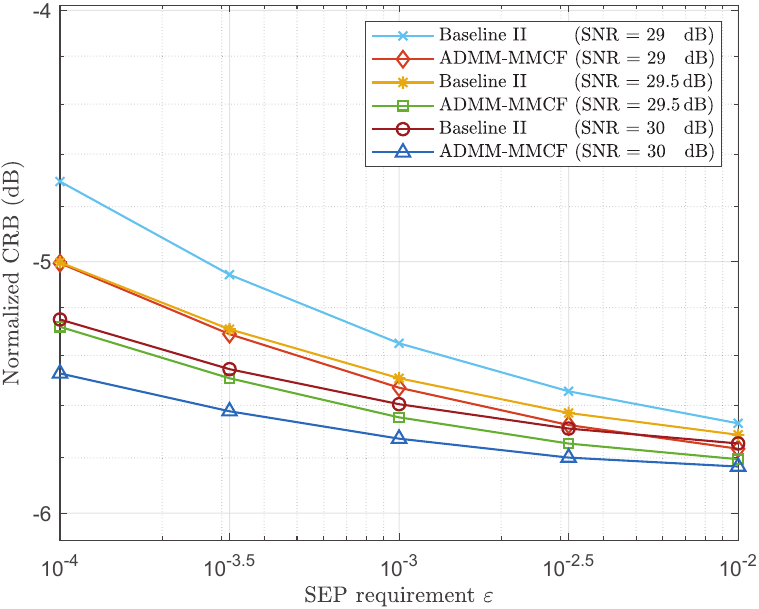}	\label{fig.ET_ISAC}
	}
	\caption{CRB versus the SEP requirement $\varepsilon$ for the proposed and benchmark schemes ($N_t=N_r=16$, $K=4$, and $L=20$).}
	\label{fig.ISAC_performance}
\end{figure}

Fig.~\ref{fig.ET_ISAC} illustrates the ISAC performance of the ``ADMM-MMCF" and ``Baseline II" algorithms. Consistent with the PT results in Fig.~\ref{fig.PT_ISAC}, a similar trade-off between the CRB and SEP performance can be observed for both schemes. Moreover, the ``ADMM-MMCF" algorithm also exhibits a lower one-bit CRB compared to the ``Baseline II" method, which validates the validity of the proposed one-bit CRB as a waveform design metric in one-bit quantized scenarios.

\begin{table*}[t]
	\renewcommand{\arraystretch}{1.2}
	\caption{Computational Complexity and CPU Time Comparison}
	\label{table I}
	\centering
	\begin{tabular}{|c|c|c|c|}
		\hline 
		\multirow{2}{*}{Algorithm}& \multirow{2}{*}{Computational complexity}& Average CPU time (in s) & Average CPU time (in s) \\
		& & ($N_t=N_r=8$ and $L=10$) & ($N_t=N_r=16$ and $L=20$)\\
		\hline
		ADMM-MMPGD & $\mathcal{O}\left(I_{o,\text{ADMM-MMPGD}}I_{i,\text{ADMM-MMPGD}}\left(N_rL\right)^3\right)$& 2.79 & 52.67
		\\
		\hline
		Baseline I & $\mathcal{O}\left(I_{o,\text{Baseline I}}I_{i,\text{Baseline I}}\left(N_rL\right)^3\right)$ & 1.15 & 18.91\\
		\hline
		ADMM-MMCF&
		$\mathcal{O}\left(I_{o,\text{ADMM-MMCF}}I_{i,\text{ADMM-MMCF}}\left(\left(N_rL\right)^3+N_r^3N_tL^2\right)\right)$
		& 1.03 & 56.64\\
		\hline
		Baseline II & $\mathcal{O}\left(I_{o,\text{Baseline II}}I_{i,\text{Baseline II}}\left(\left(N_rL\right)^3+N_r^3N_tL^2\right)\right)$ & 1.21 & 58.39\\
		\hline
	\end{tabular}
\end{table*}

Finally, Table~\ref{table I} includes the computational complexity and CPU time of both the proposed and benchmark schemes evaluated in this subsection. The simulation is performed in MATLAB on a desktop with Intel Core i7-10700 CPU and 32 GB RAM. As shown in Table~\ref{table I}, the computational times for both the proposed and benchmark algorithms increase substantially as the MIMO system size and the block length scale up.

\section{Conclusion}\label{section VII}
We have performed an in-depth study on the parameter estimation and waveform optimization for MIMO ISAC systems equipped with one-bit ADCs. By leveraging the Bussgang theorem and the worst-case Gaussian assumption, we derived novel one-bit CRB metrics for both the PT and ET scenarios, which can be approached by the developed one-bit estimation methods. Building upon the proposed CRBs and the SEP criterion, we investigated a novel ISAC waveform design problem, for which we developed an efficient ADMM framework incorporated with the MM technique to find a high-quality solution. Numerical results verify the tightness of the proposed one-bit CRBs and the superiority of our optimized waveforms over existing benchmarks. Finally, the proposed ISAC design was shown to facilitate a flexible trade-off between sensing and communication performance.

\appendices

\section{Derivation of the Gradient}\label{appendix derivative}
The objective function of problem \eqref{PLA subproblem I C} can be cast as
\begin{equation}
	\begin{aligned}	
		m(\mathbf{p},\mathbf{Q},\mathbf{x})&=-2\Re\left\{\mathbf{p}_t^H\mathbf{Q}_t^{-1}\mathbf{p}\right\}+\operatorname{tr}\left(\mathbf{Q}_t^{-1}\mathbf{p}_t\mathbf{p}_t^H\mathbf{Q}_t^{-1}\mathbf{Q}\right)\\
		&\quad +\rho\|\tilde{\mathbf{H}}\mathbf{x}-\mathbf{u}^i+\boldsymbol{\lambda}^i\|_2^2,\\
		&\triangleq m_1+m_2+m_3+m_4,
	\end{aligned}
\end{equation}
where we define $m_1=-\mathbf{p}_t^H\mathbf{Q}_t^{-1}\mathbf{p}$, $m_2=-\mathbf{p}_t^T\mathbf{Q}_t^{-T}\mathbf{p}^*$, $m_3=\operatorname{tr}\left(\mathbf{Q}_t^{-1}\mathbf{p}_t\mathbf{p}_t^H\mathbf{Q}_t^{-1}\mathbf{Q}\right)$, and $m_4=\rho\|\tilde{\mathbf{H}}\mathbf{x}-\mathbf{u}^i+\boldsymbol{\lambda}^i\|_2^2$, for simplicity of derivation. Thus, the conjugate gradient $\frac{\partial m(\mathbf{p},\mathbf{Q},\mathbf{x})}{\partial \mathbf{x}^*}$ can be expressed as
\begin{equation}
	\begin{aligned}
		\frac{\partial m(\mathbf{p},\mathbf{Q},\mathbf{x})}{\partial \mathbf{x}^*}=\left(\frac{\partial m_1}{\partial \mathbf{x}^T}+\frac{\partial m_2}{\partial \mathbf{x}^T}+\frac{\partial m_3}{\partial \mathbf{x}^T}+\frac{\partial m_4}{\partial \mathbf{x}^T}\right)^H.\label{objective gradient sum}
	\end{aligned}
\end{equation}
In the following, we present the expressions for $\frac{\partial m_1}{\partial \mathbf{x}^T}$, $\frac{\partial m_3}{\partial \mathbf{x}^T}$, and $\frac{\partial m_4}{\partial \mathbf{x}^T}$, respectively, while the expression for $\frac{\partial m_2}{\partial \mathbf{x}^T}$ has a similar form as that of $\frac{\partial m_1}{\partial \mathbf{x}^T}$ due to the fact that $m_2$ is the conjugate of $m_1$, and is thus omitted here for brevity. To simplify the notation, we first define auxiliary variables $\mathbf{P}_t=\operatorname{unvec}\left(\mathbf{Q}_t^{-1}\mathbf{p}_t\right)\in\mathbb{C}^{N_rL\times N_rL}$, $\mathbf{K}_1=\mathbf{C}_{\mathbf{rr}}\frac{\partial \mathbf{F}}{\partial \theta}\mathbf{P}_t+\mathbf{P}_t\mathbf{F}\frac{\partial \mathbf{C}_{\mathbf{rr}}}{\partial \theta}+\mathbf{P}_t\frac{\partial \mathbf{F}}{\partial \theta}\mathbf{C}_{\mathbf{rr}}+\frac{\partial \mathbf{C}_{\mathbf{rr}}}{\partial \theta}\mathbf{F}\mathbf{P}_t$, and $\mathbf{K}_2=\mathbf{C}_{\mathbf{rr}}\mathbf{F}\mathbf{P}_t+\mathbf{P}_t\mathbf{F}\mathbf{C}_{\mathbf{rr}}$, with  $\mathbf{C}_{\mathbf{rr}}$, $\frac{\partial \mathbf{F}}{\partial \theta}$, and $\frac{\partial \mathbf{C}_{\mathbf{rr}}}{\partial \theta}$ given in \eqref{covariance matrix Crr}, \eqref{F theta derivative}, and \eqref{Crr theta derivative}, respectively. Therefore, we can express $\frac{\partial m_1}{\partial \mathbf{x}^T}$ as
\begin{equation}
	\frac{\partial m_1}{\partial \mathbf{x}^T}=\frac{\partial m_{1,1}}{\partial \mathbf{x}^T}+\frac{\partial m_{1,2}}{\partial \mathbf{x}^T}+\frac{\partial m_{1,3}}{\partial \mathbf{x}^T}+\frac{\partial m_{1,4}}{\partial \mathbf{x}^T}+\frac{\partial m_{1,5}}{\partial \mathbf{x}^T}+\frac{\partial m_{1,6}}{\partial \mathbf{x}^T},
\end{equation}
where 
\begin{align}
	\frac{\partial m_{1,1}}{\partial \mathbf{x}^T}&=-\sigma_\alpha^2\mathbf{p}_t^H\mathbf{Q}_t^{-1}\left(\frac{\partial \mathbf{F}}{\partial \theta}\otimes\mathbf{F}+\mathbf{F}\otimes \frac{\partial \mathbf{F}}{\partial \theta}\right)\notag\\&\quad\times\left(\left(\mathbf{x}^H\mathbf{A}_\theta^H\right)^T\otimes\mathbf{A}_\theta\right),\\		
	\frac{\partial m_{1,2}}{\partial \mathbf{x}^T}&=-\sigma_\alpha^2\mathbf{p}_t^H\mathbf{Q}_t^{-1}\left(\mathbf{F}\otimes \mathbf{F}\right)\notag\\&\quad\times\left(\left(\mathbf{x}^H\mathbf{A}_\theta^H\right)^T\otimes\frac{\partial \mathbf{A}_\theta}{\partial \theta}
	+\left(\mathbf{x}^H\frac{\partial \mathbf{A}_\theta^H}{\partial \theta}\right)^T\otimes\mathbf{A}_\theta\right),\\
		\frac{\partial m_{1,3}}{\partial \mathbf{x}^T}&=\frac{\sigma_\alpha^2}{2}\sqrt{\frac{2}{\pi}}\mathbf{x}^H\mathbf{A}_\theta^H\operatorname{diag}\left(\mathbf{J}_2\mathbf{K}_1^H\mathbf{J}_1\right)\mathbf{A}_\theta,
\end{align}	
\begin{align}	
	\frac{\partial m_{1,4}}{\partial \mathbf{x}^T}&=-\frac{\sigma_\alpha^2}{2}\sqrt{\frac{2}{\pi}}\mathbf{x}^H\mathbf{A}_\theta^H\notag\\&\quad\times\operatorname{diag}\left(\mathbf{J}_1\operatorname{diag}\left(\frac{\partial \mathbf{C}_{\mathbf{rr}}}{\partial \theta}\right)\mathbf{J}_2\mathbf{K}_2^H
	\mathbf{J}_1\right)\mathbf{A}_\theta,\\		
	\frac{\partial m_{1,5}}{\partial \mathbf{x}^T}&=\frac{\sigma_\alpha^2}{2}\sqrt{\frac{2}{\pi}}\mathbf{x}^H\mathbf{A}_\theta^H\operatorname{diag}\left(\mathbf{J}_2\mathbf{K}_2^H
	\mathbf{J}_1\right)\frac{\partial \mathbf{A}_\theta}{\partial \theta}\notag
	\\ &\quad +
	\frac{\sigma_\alpha^2}{2}\sqrt{\frac{2}{\pi}}\mathbf{x}^H\frac{\partial \mathbf{A}_\theta^H}{\partial \theta}\operatorname{diag}\left(\mathbf{J}_2\mathbf{K}_2^H
	\mathbf{J}_1\right)\mathbf{A}_\theta,\\
	\frac{\partial m_{1,6}}{\partial \mathbf{x}^T}&=-\frac{\sigma_\alpha^2}{4}\sqrt{\frac{2}{\pi}}\mathbf{x}^H\mathbf{A}_\theta^H\notag\\&\quad\times\operatorname{diag}\left(\mathbf{J}_2\mathbf{K}_2^H
	\mathbf{J}_1\operatorname{diag}\left(\frac{\partial \mathbf{C}_{\mathbf{rr}}}{\partial \theta}\right)\mathbf{J}_1\right)\mathbf{A}_\theta,
\end{align}
with $\mathbf{A}_\theta$ and  $\frac{\partial \mathbf{A}_\theta}{\partial \theta}$ given in \eqref{simplified notation Atheta} and \eqref{A theta derivative}, respectively, and we further define $\mathbf{J}_1=\operatorname{diag}\left(\mathbf{C}_{\mathbf{rr}}\right)^{-1}$ and $\mathbf{J}_2=\operatorname{diag}\left(\mathbf{C}_{\mathbf{rr}}\right)^{-\frac{1}{2}}$, respectively, for notational convenience.

We continue by deriving the expression for $\frac{\partial m_3}{\partial \mathbf{x}^T}$. By defining $\mathbf{N}_t=-\left(\mathbf{P}_t^*\otimes \mathbf{P}_t\right)$, $\mathbf{k}=\mathbf{T}_{N_rL,N_rL}\mathbf{N}_t^T\operatorname{vec}\left(\mathbf{C}_{\mathbf{zz}}^*\right)+\mathbf{N}_t\mathbf{T}_{N_rL,N_rL}\operatorname{vec}\left(\mathbf{C}_{\mathbf{zz}}^*\right)$, and $\mathbf{K}_3=\operatorname{unvec}\left(\mathbf{k}\right)\in\mathbb{C}^{N_rL\times N_rL}$, we then express $\frac{\partial m_3}{\partial \mathbf{x}^T}$ as
\begin{equation}
	\begin{aligned}
		\frac{\partial m_3}{\partial \mathbf{x}^T}&=\frac{\sigma_\alpha^2}{2}\sqrt{\frac{2}{\pi}}\mathbf{x}^H\mathbf{A}_\theta^H\operatorname{diag}\left(\mathbf{J}_2\mathbf{C}_{\mathbf{rr}}\mathbf{F}\mathbf{K}_3^H\mathbf{J}_1\right)\mathbf{A}_\theta\\
		&\quad+\frac{\sigma_\alpha^2}{2}\sqrt{\frac{2}{\pi}}\mathbf{x}^H\mathbf{A}_\theta^H\operatorname{diag}\left(\mathbf{J}_2\mathbf{K}_3^H\mathbf{F}\mathbf{C}_{\mathbf{rr}}\mathbf{J}_1\right)\mathbf{A}_\theta\\
		&\quad -\sigma_\alpha^2\mathbf{x}^H\mathbf{A}_\theta^H\mathbf{F}\mathbf{K}_3^H\mathbf{F}\mathbf{A}_\theta.
	\end{aligned}
\end{equation}
Lastly, we obtain the expression for $\frac{\partial m_4}{\partial \mathbf{x}^T}$ as follows:
\begin{equation}
\begin{aligned}
\frac{\partial m_4}{\partial \mathbf{x}^T}&=\rho\left(\tilde{\mathbf{H}}\mathbf{x}-\mathbf{u}^i+\boldsymbol{\lambda}^i\right)^H\tilde{\mathbf{H}}.
\end{aligned}	
\end{equation}
Notably, the evaluation of the aforementioned derivations can be simplified by using $\left(\mathbf{C}^T\otimes \mathbf{A}\right)\operatorname{vec}(\mathbf{B})=\operatorname{vec}\left(\mathbf{ABC}\right)$, which effectively circumvents the complexity of high-dimensional matrix computations.

\section{Proof of Proposition \ref{PT proposition I}}\label{appendix subproblem II solution}
We start by reformulating problem \eqref{PT subproblem II scalar} as a scalar problem with respect to $d_k$ whose optimal solution can be readily achieved, and then the optimal solution to $\tilde{u}_{k,l},l\in\mathcal{L}$ is obtained in closed form given the solution $d_k^\star$. Specifically, for a given $d_k$, problem \eqref{PT subproblem II scalar} reduces to a quadratic program subject to box constraints \cite{ref304} whose optimal solution is
\begin{equation}\label{solution to ul}
\tilde{u}_{k,l}^\star=\left\{\begin{aligned}
&(\tilde{s}_{k,l}+1)d_k-\tilde{a}_{k,l}  ,\quad && \chi_{k,l}^i> (\tilde{s}_{k,l}+1)d_k-\tilde{a}_{k,l},\\
&(\tilde{s}_{k,l}-1)d_k+\tilde{b}_{k,l}, \quad && \chi_{k,l}^i < (\tilde{s}_{k,l}-1)d_k+\tilde{b}_{k,l},\\
&\chi_{k,l}^i,\quad && \text{otherwise}.
\end{aligned}\right.
\end{equation}
Then, by substituting the above solutions of $\tilde{u}_{k,l},l\in\mathcal{L}$  back into \eqref{PT subproblem II scalar}, we thus arrive at a one-dimensional convex problem with respect to $d_k$, which, however, has different forms, depending on different solutions of $\tilde{u}_{k,l}$ in \eqref{solution to ul}. To proceed, we first split the feasible set of $d_k$ into different subsets, the possible boundary points of which, by taking into account the inequalities in \eqref{solution to ul} and $d_k\ge \gamma$ in problem \eqref{PT subproblem II scalar}, can be given by $\gamma$, $\left\{\left.\frac{\chi_{k,l}^i+\tilde{a}_{k,l}}{1+\tilde{s}_{k,l}},\frac{\tilde{b}_{k,l}-\chi_{k,l}^i}{1-\tilde{s}_{k,l}}\right\vert \frac{\chi_{k,l}^i+\tilde{a}_{k,l}}{1+\tilde{s}_{k,l}}>\gamma, \frac{\tilde{b}_{k,l}-\chi_{k,l}^i}{1-\tilde{s}_{k,l}}> \gamma\right\}_{l=1}^L$, and $\infty$. Moreover, we sort the above boundary points into an ascending order and denote the set of the sorted boundary points by $\mathcal{B}\triangleq\left\{\tau_1,\dots,\tau_{\operatorname{card}(\mathcal{B})}\right\}$, thereby yielding $\operatorname{card}(\mathcal{B})-1$ feasible subsets, i.e., $\tau_{i}\le d_{k,i}\le \tau_{i+1}, i=1,\dots,\operatorname{card}(\mathcal{B})-1$, where we replace $d_k$ with $d_{k,i}$ to highlight the index of the subset. Hence, the problem associated with the $i$-th feasible subset can be cast as
\begin{equation}\label{d optimization problem}
	\begin{aligned}
		\mathop{\text{minmize}}\limits_{d_{k,i}}&\quad p\left(d_{k,i}\right)\triangleq \sum\limits_{l\in\Gamma_i} \left((\tilde{s}_{k,l}+1)d_{k,i}-\tilde{a}_{k,l}-\chi_{k,l}^i\right)^2\\&\hspace{1.8cm}  +\sum\limits_{l\in\Omega_i} \left((\tilde{s}_{k,l}-1)d_{k,i}+\tilde{b}_{k,l}-\chi_{k,l}^i\right)^2\\
		\text{subject to }&\quad \tau_i\le d_{k,i} \le \tau_{i+1},
	\end{aligned}
\end{equation}
where we define $\Gamma_i\triangleq \left\{l\left\vert \chi_{k,l}^i> (\tilde{s}_{k,l}+1)\tau_{i/ i+1}-\tilde{a}_{k,l}\right.\right\}$ and $\Omega_i\triangleq \left\{l\left\vert \chi_{k,l}^i< (\tilde{s}_{k,l}-1)\tau_{i/i+1}+\tilde{b}_{k,l}\right.\right\}$. Obviously, problem \eqref{d optimization problem} admits the following closed-form optimal solution:
\begin{equation}\label{solution to d}
d_{k,i}^\star = \max\left(\tau_i,\min\left(\hat{d}_{k,i},\tau_{i+1}\right)\right),
\end{equation}
where
\begin{equation}
\begin{aligned}
&\hat{d}_{k,i}\\&=\frac{\sum\limits_{l\in\Gamma_i}(\tilde{a}_{k,l}+\chi_{k,l}^i)(\tilde{s}_{k,l}+1)+\sum\limits_{l\in\Omega_i}(\chi_{k,l}^i-\tilde{b}_{k,l})(\tilde{s}_{k,l}-1)}{\sum\limits_{l\in\Gamma_i}(\tilde{s}_{k,l}+1)^2+\sum\limits_{l\in\Omega_i}(\tilde{s}_{k,l}-1)^2}.
\end{aligned}
\end{equation}
Then, we have $d_k^\star=\mathop{\text{argmin}}\limits_{\left\{d_{k,i}^\star\right\}} \left\{p(d_{k,1}^\star),\dots,p(d_{k,\operatorname{card}(\mathcal{B})-1}^\star)\right\}$. This completes the proof.

\section{Proof of Theorem \ref{ET theorem I}}\label{appendix ET problem solution}
First note that the objective function in problem \eqref{ET subproblem I A} has one nonconvex term $-\operatorname{tr}\left(\mathbf{L}^H\mathbf{M}^{-1}\mathbf{L}\right)$ (the notation $(\mathbf{x})$ is dropped for simplicity of derivation), which can be upperbounded by the following first-order Taylor approximation \cite{ref303}:
\begin{equation}
	\begin{aligned}
		&\quad -\operatorname{tr}\left(\mathbf{L}^H\mathbf{M}^{-1}\mathbf{L}\right)\\&\le -2\Re\left\{\operatorname{tr}\left(\mathbf{L}_t^H\mathbf{M}_t^{-1}\mathbf{L}\right)\right\}+\operatorname{tr}\left(\mathbf{M}_t^{-1}\mathbf{L}_t\mathbf{L}_t^H\mathbf{M}_t^{-1}\mathbf{M}\right)+c_t,\label{extended Taylor lower bound}
	\end{aligned}
\end{equation}
where the subscript $(\cdot)_t$ indicates the value of the respective expression at the $t$-th iteration, and $c_t$ represents the constant term. Next, we recast the upper bound in \eqref{extended Taylor lower bound} by a more concise form. Specifically, by recalling that $\mathbf{L}= \tilde{\mathbf{X}}\mathbf{C}_{\mathbf{aa}}$ and $\tilde{\mathbf{X}}\triangleq\mathbf{X}^T\otimes \mathbf{I}_{N_r}$, we first express $\operatorname{tr}\left(\mathbf{L}_t^H\mathbf{M}_t^{-1}\mathbf{L}\right)$ as
\begin{equation}\label{extended lower bound term I}
	\begin{aligned}
		&\quad\operatorname{tr}\left(\mathbf{L}_t^H\mathbf{M}_t^{-1}\mathbf{L}\right)\\
		&=\operatorname{vec}\left(\left(\mathbf{C}_{\mathbf{aa}}\mathbf{L}_t^H\mathbf{M}_t^{-1}\right)^H\right)^H\operatorname{vec}\left(\mathbf{X}^T\otimes \mathbf{I}_{N_r}\right)\\
		&\overset{(a)}{=}\operatorname{vec}\left(\left(\mathbf{C}_{\mathbf{aa}}\mathbf{L}_t^H\mathbf{M}_t^{-1}\right)^H\right)^H
		\left(\mathbf{I}_{N_t}\otimes\mathbf{T}_{N_r,L}\otimes \mathbf{I}_{N_r}\right)\\&\quad \times
		\left(\left(\mathbf{T}_{N_t,L}\operatorname{vec}(\mathbf{X})\right)\otimes \operatorname{vec}\left(\mathbf{I}_{N_r}\right)\right)\\
		&\overset{(b)}{=}\operatorname{vec}\left(\left(\mathbf{C}_{\mathbf{aa}}\mathbf{L}_t^H\mathbf{M}_t^{-1}\right)^H\right)^H
		\left(\mathbf{I}_{N_t}\otimes\mathbf{T}_{N_r,L}\otimes \mathbf{I}_{N_r}\right)\\&\quad \times \left(\mathbf{T}_{N_t,L}\otimes\operatorname{vec}\left(\mathbf{I}_{N_r}\right)\right)\operatorname{vec}(\mathbf{X})\\&\triangleq \mathbf{l}_t^H\mathbf{x},
	\end{aligned}
\end{equation}
where we apply $\operatorname{vec}\left(\mathbf{A}\otimes \mathbf{B}\right)=\left(\mathbf{I}_n\otimes\mathbf{T}_{q,m}\otimes\mathbf{I}_{p}\right)\left(\operatorname{vec}(\mathbf{A})\otimes\operatorname{vec}(\mathbf{B})\right)$ for $\mathbf{A}\in\mathbb{C}^{m\times n}$ and $\mathbf{B}\in\mathbb{C}^{p\times q}$ in $(a)$ and $\left(\mathbf{A}\mathbf{B}\right)\otimes\left(\mathbf{CD}\right)=\left(\mathbf{A}\otimes\mathbf{C}\right)\left(\mathbf{B}\otimes\mathbf{D}\right)$ in $(b)$, and further define
$\mathbf{l}_t=\left(\mathbf{T}_{L,N_t}\otimes\operatorname{vec}\left(\mathbf{I}_{N_r}\right)\right)\times\left(\mathbf{I}_{N_t}\otimes\mathbf{T}_{L,N_r}\otimes \mathbf{I}_{N_r}\right)\operatorname{vec}\left(\left(\mathbf{C}_{\mathbf{aa}}\mathbf{L}_t^H\mathbf{M}_t^{-1}\right)^H\right)$.
Moreover, by substituting $\mathbf{M}=\tilde{\mathbf{X}}\mathbf{C}_{\mathbf{aa}}\tilde{\mathbf{X}}^H+\left(\frac{\pi}{2}-1\right)\operatorname{diag}\left(\tilde{\mathbf{X}}\mathbf{C}_{\mathbf{aa}}\tilde{\mathbf{X}}^H\right)+\frac{\pi\sigma_v^2}{2}\mathbf{I}_{N_rL}$ into $\operatorname{tr}\left(\mathbf{M}_t^{-1}\mathbf{L}_t\mathbf{L}_t^H\mathbf{M}_t^{-1}\mathbf{M}\right)$, we have
\begin{equation}\label{extended lower bound term II}
	\begin{aligned}
		&\quad\operatorname{tr}\left(\mathbf{M}_t^{-1}\mathbf{L}_t\mathbf{L}_t^H\mathbf{M}_t^{-1}\mathbf{M}\right)\\
		&=\operatorname{tr}\left(\mathbf{M}_t^{-1}\mathbf{L}_t\mathbf{L}_t^H\mathbf{M}_t^{-1}\tilde{\mathbf{X}}\mathbf{C}_{\mathbf{aa}}\tilde{\mathbf{X}}^H\right)\\
		&\quad+ \left(\frac{\pi}{2}-1\right)\operatorname{tr}\left(\mathbf{M}_t^{-1}\mathbf{L}_t\mathbf{L}_t^H\mathbf{M}_t^{-1}\operatorname{diag}\left(\tilde{\mathbf{X}}\mathbf{C}_{\mathbf{aa}}\tilde{\mathbf{X}}^H\right)\right)\\
		&\quad +\frac{\pi\sigma_v^2}{2}\operatorname{tr}\left(\mathbf{M}_t^{-1}\mathbf{L}_t\mathbf{L}_t^H\mathbf{M}_t^{-1}\right)\\
		&\overset{(a)}{=}\operatorname{tr}\left(\tilde{\mathbf{M}}_t\tilde{\mathbf{X}}\mathbf{C}_{\mathbf{aa}}\tilde{\mathbf{X}}^H\right)+\tilde{c}_t,\\
		&\overset{(b)}{=}\mathbf{x}^H\bar{\mathbf{M}}_t\mathbf{x} +\tilde{c}_t 
	\end{aligned}
\end{equation}
where
$\operatorname{tr}\left(\mathbf{A}\operatorname{diag}\left(\mathbf{B}\right)\right)=\operatorname{tr}\left(\operatorname{diag}\left(\mathbf{A}\right)\mathbf{B}\right)$ is applied in $(a)$ with $\tilde{\mathbf{M}}_t\triangleq \mathbf{M}_t^{-1}\mathbf{L}_t\mathbf{L}_t^H\mathbf{M}_t^{-1}+(\frac{\pi}{2}-1)\operatorname{diag}\left(\mathbf{M}_t^{-1}\mathbf{L}_t\mathbf{L}_t^H\mathbf{M}_t^{-1}\right)$ and $\tilde{c}_t$ being the constant term. Moreover, $(b)$ can be obtained in a similar way as \eqref{extended lower bound term I} with $\bar{\mathbf{M}}_t$ being
\begin{equation}\label{auxiliary variable Mt}
\begin{aligned}
\bar{\mathbf{M}}_t=\tilde{\mathbf{T}}^T
\left(\mathbf{C}_{\mathbf{aa}}^T\otimes \tilde{\mathbf{M}}_t\right)\tilde{\mathbf{T}}
\end{aligned}
\end{equation}
and $\tilde{\mathbf{T}}\triangleq \left(\mathbf{I}_{N_t}\otimes \mathbf{T}_{N_r,L}\otimes \mathbf{I}_{N_r}\right)\left(\mathbf{T}_{N_t,L}\otimes\operatorname{vec}\left(\mathbf{I}_{N_r}\right)\right)$. Note that \eqref{auxiliary variable Mt} can be evaluated efficiently by leveraging the commutativity of $\tilde{\mathbf{T}}$. 

Using \eqref{extended lower bound term I} and \eqref{extended lower bound term II}, the upper bound in \eqref{extended Taylor lower bound} can be equivalently rewritten as
\begin{equation}\label{extended Taylor lower bound transformation}
	\begin{aligned}
		-\operatorname{tr}\left(\mathbf{L}^H\mathbf{M}^{-1}\mathbf{L}\right)&\le - 2\Re\left\{\mathbf{l}_t^H\mathbf{x}\right\}+\mathbf{x}^H\bar{\mathbf{M}}_t\mathbf{x}+\bar{c}_t,
	\end{aligned}
\end{equation}
where $\bar{c}_t=c_t+\tilde{c}_t$. Therefore, a surrogate form of problem \eqref{ET subproblem I A} can be cast as
\begin{equation}\label{ET surrogate problem I}	
	\begin{aligned}
		\mathop{\text{minimize}}\limits_{\mathbf{x}}&\quad \mathbf{x}^H\bar{\mathbf{M}}_t\mathbf{x}- 2\Re\left\{\mathbf{l}_t^H\mathbf{x}\right\} +\rho\|\tilde{\mathbf{H}}\mathbf{x}-\mathbf{u}^i+\boldsymbol{\lambda}^i\|_2^2\\
		\text{subject to}&\quad \eqref{ISAC problem ET DAC III}.	
	\end{aligned}
\end{equation}
Furthermore, we apply \cite[Eq. (26)]{ref303} to upperbound the quadratic terms $\mathbf{x}^H\bar{\mathbf{M}}_t\mathbf{x}$ and $\|\tilde{\mathbf{H}}\mathbf{x}-\mathbf{u}^i+\boldsymbol{\lambda}^i\|_2^2$ in the objective of problem \eqref{ET surrogate problem I}, which yields
\begin{equation}\label{quadratic upper bound property}
\begin{aligned}	
&\quad \mathbf{x}^H\bar{\mathbf{M}}_t\mathbf{x}+\rho \|\tilde{\mathbf{H}}\mathbf{x}-\mathbf{u}^i+\boldsymbol{\lambda}^i\|_2^2\\
&\le \lambda_{\max}\left(\bar{\mathbf{M}}_t\right)\|\mathbf{x}\|_2^2+\rho \lambda_{\max}\left(\tilde{\mathbf{H}}^H\tilde{\mathbf{H}}\right)\|\mathbf{x}\|_2^2\\
&\quad +2\Re\left(\mathbf{x}_t^H\left(\bar{\mathbf{M}}_t-\lambda_{\max}\left(\bar{\mathbf{M}}_t\right)\mathbf{I}_{N_tL}\right)\mathbf{x}\right)\\&\quad-2\rho\Re\left(\left(\mathbf{u}^i-\boldsymbol{\lambda}^i\right)^H\tilde{\mathbf{H}}\mathbf{x}\right)\\
&\quad +2\rho\Re\left(\mathbf{x}_t^H\left(\tilde{\mathbf{H}}^H\tilde{\mathbf{H}}-\lambda_{\max}\left(\tilde{\mathbf{H}}^H\tilde{\mathbf{H}}\right)\mathbf{I}_{N_tL}\right)\mathbf{x}\right)+\hat{c}_t,
\end{aligned}
\end{equation}
where $\hat{c}_t$ denotes the irrelevant constant term. By using \eqref{quadratic upper bound property}, we thus obtain the surrogate of problem \eqref{ET surrogate problem I} as follows:
\begin{equation}\label{ET surrogate problem II}
\begin{aligned}
\mathop{\text{minimize}}\limits_{\mathbf{x}}&\quad \left(\lambda_{\max}\left(\bar{\mathbf{M}}_t\right)+\rho\lambda_{\max}\left(\tilde{\mathbf{H}}^H\tilde{\mathbf{H}}\right)\right)\|\mathbf{x}\|_2^2\\&\quad-2\Re\left(\mathbf{m}_t^H\mathbf{x}\right)\\
\text{subject to}&\quad \eqref{ISAC problem ET DAC III},	
\end{aligned}
\end{equation}
where
\begin{equation}\label{expression of mt}
\begin{aligned}	
\mathbf{m}_t&=\mathbf{l}_t+\left(\lambda_{\max}\left(\bar{\mathbf{M}}_t\right)\mathbf{I}_{N_tL}-\bar{\mathbf{M}}_t\right)\mathbf{x}_t +\rho\tilde{\mathbf{H}}^H\left(\mathbf{u}^i-\boldsymbol{\lambda}^i\right)\\&\quad+\rho\left(\lambda_{\max}\left(\tilde{\mathbf{H}}^H\tilde{\mathbf{H}}\right)\mathbf{I}_{N_tL}-\tilde{\mathbf{H}}^H\tilde{\mathbf{H}}\right)\mathbf{x}_t.
\end{aligned}
\end{equation}
Thus, the proof is completed.


\begin{thebibliography}{99}
\bibliographystyle{IEEEtran}

\bibitem{ref0}
Q.~Lin, H.~Shen, W.~Xu, and C.~Zhao, ``CRB oriented transmit waveform optimization for one-bit MIMO radar," in \emph{Proc. IEEE 101st Veh. Technol. Conf. (VTC2025-Spring)}, Oslo, Norway, Jun. 2025, pp.~1--6.

\bibitem{ref1}
F.~Liu \emph{et al.}, ``Joint radar and communication design: Applications, state-of-the-art, and the road ahead," \emph{IEEE Trans. Commun.}, vol.~68, no.~6, pp.~3834--3862, Jun.~2020.

\bibitem{ref2}
J.~A.~Zhang \emph{et al.}, ``An overview of signal processing techniques for joint communication and radar sensing," \emph{IEEE J. Sel. Topics Signal Process.}, vol.~15, no.~6, pp. 1295--1315, Nov.~2021.

\bibitem{ref3}
A.~Liu \emph{et al.}, ``A survey on fundamental limits of integrated sensing and communication," \emph{IEEE Commun. Surv. Tut.}, vol.~24, no.~2, pp. 994--1034, 2nd Quart., 2022.

\bibitem{ref4}
F.~Liu \emph{et al.}, ``Integrated sensing and communications: Toward dualfunctional wireless networks for 6G and beyond," \emph{IEEE J. Sel. Areas Commun.}, vol.~40, no.~6, pp. 1728--1767, Jun. 2022.

\bibitem{ref5}
\emph{Framework Overall Objectives Future Develop. IMT for 2030 Beyond}, ITU-R Standard M.2160-0, Nov. 2023.

\bibitem{ref21}
L.~Chen, F.~Liu, W.~Wang, and C.~Masouros, ``Joint radar-communication transmission: A generalized Pareto optimization framework," \emph{IEEE Trans. Signal Process.}, vol.~69, pp.~2752--2765, 2021.

\bibitem{ref22}
X.~Liu, T.~Huang, and Y.~Liu, ``Transmit design for joint MIMO radar and multiuser communications with transmit covariance constraint," \emph{IEEE J. Sel. Areas Commun.}, vol.~40, no.~ 6, pp. 1932--1950, Jun. 2022.

\bibitem{ref23}
R.~Liu \emph{et al.}, ``Dual-functional radar-communication waveform design: A symbol-level precoding approach," \emph{IEEE J. Sel. Topics  Signal Process.}, vol.~15, no.~6, pp.~1316--1331, Nov. 2021.

\bibitem{ref24}
F.~Liu, Y.~-F.~Liu, A.~Li, C.~Masouros, and Y.~C.~Eldar, ``Cram\'er-Rao bound optimization for joint radar-communication beamforming," \emph{IEEE Trans. Signal Process.}, vol.~70, pp.~240--253, 2022.

\bibitem{ref25}
H.~Hua, T.~X.~Han, and J.~Xu, ``MIMO integrated sensing and communication: CRB-rate tradeoff," \emph{IEEE Trans. Wireless Commun.}, vol.~23, no.~4, pp. 2839--2854, Apr. 2024.

\bibitem{ref26}
C.~Xu and S.~Zhang, ``MIMO integrated sensing and communication exploiting prior information," \emph{IEEE J. Sel. Areas Commun.}, vol.~42, no.~9, pp. 2306--2321, Sep. 2024.

\bibitem{ref27}
C.~G.~Tsinos \emph{et al.}, ``Joint transmit waveform and receive filter design for dual-function radar-communication systems," \emph{IEEE J. Sel. Topics Signal Process.}, vol.~15, no.~6, pp.~1378--1392, Nov. 2021.

\bibitem{ref28}
L.~Chen \emph{et al.}, ``Generalized transceiver beamforming for DFRC with MIMO radar and MU-MIMO communication," \emph{IEEE J. Sel. Areas Commun.}, vol.~40, no.~6, pp.~1795--1808, Jun. 2022.

\bibitem{ref41}
S.~Jacobsson \emph{et al.}, ``Quantized precoding for massive MU-MIMO," \emph{IEEE Trans. Commun.}, vol.~65, no.~11, pp.~4670--4684, Nov. 2017.

\bibitem{ref42}
M.~Shao, Q.~Li, W.~-K.~Ma, and A.~M.~-C.~So, ``A framework for one-bit and constant-envelope precoding over multiuser massive MISO channels," \emph{IEEE Trans. Signal Process.}, vol.~67, no.~20, pp.~5309--5324, Oct. 2019.

\bibitem{ref43}
S.~Cai, H.~Zhu, C.~Shen, and T.~-H.~Chang, ``Joint symbol level precoding and receive beamforming optimization for multiuser MIMO downlink," \emph{IEEE Trans. Signal Process.}, vol.~70, pp.~6185--6199, 2022.

\bibitem{ref44}
Y.~Li \emph{et al.}, ``Channel estimation and performance analysis of one-bit massive MIMO systems," \emph{IEEE Trans. Signal Process.}, vol.~65, no.~15, pp. 4075--4089, Aug. 2017.

\bibitem{ref45}
M.~Shao and W.~-K.~Ma, ``Binary MIMO detection via homotopy optimization and its deep adaptation," \emph{IEEE Trans. Signal Process.}, vol.~69, pp.~781--796, 2021.

\bibitem{ref46}
Z.~Cheng, S.~Shi, Z.~He, and B.~Liao, ``Transmit sequence design for dual-function radar-communication system with one-bit DACs," \emph{IEEE Trans. Wireless Commun.}, vol.~20, no.~9, pp. 5846--5860, Sep. 2021.

\bibitem{ref47}
X.~Yu \emph{et al.}, ``A precoding approach for dual-functional radar-communication system with one-bit DACs," \emph{IEEE J. Sel. Areas Commun.}, vol.~40, no.~6, pp. 1965--1977, Jun. 2022.

\bibitem{ref48}
Q.~Lin, H.~Shen, Z.~Li, W.~Xu, C.~Zhao, and X. You, ``One-bit transceiver optimization for mmWave integrated sensing and communication systems," \emph{IEEE Trans. Commun.}, vol.~73, no.~2, pp. 800--816, Feb. 2025.

\bibitem{ref49}
K.~U.~Mazher, A.~Mezghani, and R.~W.~Heath, ``Improved CRB for millimeter-wave radar with 1-bit ADCs," \emph{IEEE Open J. Signal Process.}, vol.~2, pp.~318--335, May 2021.





\bibitem{ref102}
Y.~Liu, M.~Shao, W.~-K.~Ma, and Q.~Li, ``Symbol-level precoding through the lens of zero forcing and vector perturbation," \emph{IEEE Trans. Signal Process.}, vol.~70, pp. 1687--1703, 2022.

\bibitem{ref103} 
\"O.~T.~Demir and E.~Bj\"ornson, ``The Bussgang decomposition of nonlinear systems: Basic theory and MIMO extensions [lecture notes]," \emph{IEEE Signal Process. Mag.}, vol.~38, no.~1, pp.~131--136, Jan. 2021.

\bibitem{ref104}
A.~Mezghani and J.~A.~Nossek, ``Capacity lower bound of MIMO channels with output quantization and correlated noise," in \emph{Proc. IEEE Int. Symp. Inf. Theory (ISIT)}, Boston, MA, USA, Jul. 2012, pp. 1--5.


\bibitem{ref200}
M.~A.~Richards, \emph{Fundamentals of Radar Signal Processing}. New York, NY, USA: McGraw-Hill, 2014.

\bibitem{ref201}
S.~M.~Kay, \emph{Fundamentals of Statistical Signal Processing: Estimation Theory.} Englewood Cliffs, NJ, USA: Prentice-hall, 1993.

\bibitem{ref202}
P.~Stoica and P.~Babu, ``The Gaussian data assumption leads to the largest Cram\'er-Rao bound [Lecture Notes]," \emph{IEEE Signal Process. Mag.}, vol.~28, no.~3, pp. 132--133, May 2011.


\bibitem{ref204}
X.~Huang and B.~Liao, ``One-bit MUSIC," \emph{IEEE Signal Process. Lett.}, vol.~26, no.~7, pp. 961--965, Jul. 2019.

\bibitem{ref205}
M.~Ding, I.~Atzeni, A.~T\"olli, and A.~L.~Swindlehurst, ``On optimal MMSE channel estimation for one-bit quantized MIMO systems," \emph{IEEE Trans. Signal Process.}, vol.~73, pp.~617--632, 2025.

\bibitem{ref301}
S.~Boyd \emph{et al.}, ``Distributed optimization and statistical learning via the alternating direction method of multipliers," \emph{Found. Trends Mach. Learn.}, vol.~3, no.~1, pp. 1--122, 2011.

\bibitem{ref302}
L.~Li, X.~Wang, and G.~Wang, ``Alternating direction method of multipliers for separable convex optimization of real functions in complex variables," \emph{Math. Problems Eng.}, vol.~2015, Art.~no.~104531.

\bibitem{ref303}
Y.~Sun, P.~Babu, and D.~P.~Palomar, ``Majorization-minimization algorithms in signal processing, communications, and machine learning," \emph{IEEE Trans. Signal Process.}, vol.~65, no.~3, pp. 794--816, Feb. 2017.

\bibitem{ref304}
S.~Boyd and L.~Vandenberghe, \emph{Convex Optimization}. Cambridge, U.K.: Cambridge Univ. Press, 2004.

\bibitem{ref305}
Y.~-F.~Liu \emph{et al.}, ``A survey of recent advances in optimization methods for wireless communications," \emph{IEEE J. Sel. Areas Commun.}, vol.~42, no.~11, pp. 2992--3031, Nov. 2024.

\bibitem{ref306}
J. Nocedal and S. J. Wright, \emph{Numerical Optimization.} New York, NY, USA: Springer, 2006.


\bibitem{ref501}
S.~L.~Loyka, ``Channel capacity of MIMO architecture using the exponential correlation matrix," \emph{IEEE Commun. Lett.}, vol.~5, no.~9, pp. 369--371, Sep. 2001.

\bibitem{ref502}
J.~Li, L.~Xu, P.~Stoica, K.~W.~Forsythe, and D.~W.~Bliss, ``Range compression and waveform optimization for MIMO radar: A Cram\'er-Rao bound based study," \emph{IEEE Trans. Signal Process.}, vol.~56, no.~1, pp. 218--232, Jan. 2008.

\bibitem{ref503}
M. Grant and S. Boyd. (2014). \emph{CVX: MATLAB Software for Disciplined Convex Programming, Version 2.1.} [Online]. Available: http://cvxr.com/cvx/

\bibitem{ref504}
I.~Bekkerman and J.~Tabrikian, ``Target detection and localization using MIMO radars and sonars," \emph{IEEE Trans. Signal Process.}, vol.~54, no.~10, pp. 3873--3883, Oct. 2006.




\end{thebibliography}
\end{document}